\providecommand{\algorithmname}{Algorithm}
\theoremstyle{definition}
\newtheorem{definition}{Definition}[section]
\newtheorem{theorem}{Theorem}[section]
\newtheorem{lemma}[theorem]{Lemma}
\newcommand{\expect}[1]{\mathbb{E}\left[ #1 \right]}
\newlength{\dhatheight}
\newcommand{\gre}{\varepsilon}
\newtheorem{prop}{Proposition}
\newtheorem{claim}[theorem]{Claim}
\theoremstyle{remark}
\newtheorem{question}{Question}
\newtheorem{coro}{Corollary}
\newcommand{\bea}{\begin{array}}
\newcommand{\ena}{\end{array}}
\newcommand{\bds}{\begin {description}}
\newcommand{\eds}{\end {description}}
\newcommand{\bdf}{\begin{definition}}
\newcommand{\blm}{\begin{lemma}}
\newcommand{\edf}{\end{definition}}
\newcommand{\elm}{\end{lemma}}
\newcommand{\bthm}{\begin{theorem}}
\newcommand{\ethm}{\end{theorem}}
\newcommand{\bprp}{\begin{prop}}
\newcommand{\eprp}{\end{prop}}
\newcommand{\bcl}{\begin{claim}}
\newcommand{\ecl}{\end{claim}}
\newcommand{\bcr}{\begin{coro}}
\newcommand{\ecr}{\end{coro}}
\newcommand{\bquest}{\begin{question}}
\newcommand{\equest}{\end{question}}
\newcommand{\rarrow}{{\rightarrow}}
\title{Distributed learning in congested environments with partial information}
\author{Tomer Boyarski and  Amir Leshem and  Vikram Krishnamurthy}
\begin{document}

\maketitle
\begin{abstract}

How can non-communicating agents learn to share congested resources efficiently? This is a  challenging task when the agents can access the same resource simultaneously (in contrast to multi-agent multi-armed bandit problems) and the  resource valuations differ among agents.  We present a fully distributed algorithm for learning to share in congested environments and prove that the agents' regret with respect to the optimal allocation is poly-logarithmic in the time horizon. Performance in the non-asymptotic regime is illustrated in numerical  simulations. The  distributed algorithm has applications 
in 
cloud computing and spectrum sharing.

keywords: Distributed learning, congestion games, poly-logarithmic regret.
\end{abstract}

\section{Introduction}
\label{introduction}
Suppose $N$ agents need to share $M$ resources where $N\gg M$, i.e., in a congested environment. The utility of each agent $n$ at each time instant $t$ depends on the resource $m_n^t$  it chooses and is inversely proportional to the number of other agents that choose the same resource at the same time. Each agent is aware of this time-sharing structure but can only measure its  utility with added sub-Gaussian noise that is i.i.d. in time.

This paper considers a generalization of the multi-agent multi-armed bandit problem for distributed resource sharing. Our formulation uses  a congestion game with incomplete information. The algorithms and analysis in this paper departs significantly from existing works in that we consider simultaneous  resource sharing in heavily loaded systems with multiple non-communicating  agents. 
Specifically, we develop a learning framework for incomplete information congestion games with non-linear utilities. 
To that end, we construct a novel algorithm (Estimate, Negotiate, Exploit) where the learning is divided into epochs of increasing length. Each epoch has a constant exploration phase followed by a negotiation phase which is poly-logarithmic fraction of the following exploitation phase. By proving that the probability of error of the negotiation phase decreases sup-exponentially we obtain the main result which is the  poly-logarithmic regret in Theorem \ref{thm:TotRegBnd}. 

The algorithm extends the work of \cite{marden2014achieving} to incomplete information games and  \cite{bistritz2018distributed} to the heavily congested case where there are fewer resources than agents and there is no collision information but only rewards which depend on the load since multiple agents access each resource simultaneously. 

To elaborate on the motivation for our approach, we note that 
sharing of resources among agents can benefit significantly from cooperation. This is especially important when there is no central management of the resources.  Resource sharing is particularly hard when communication between agents is limited or does not exist. A further complication arises when agents need to learn their individual resource valuations. An example of unknown valuations is the case of multiple servers have different hardware architectures, e.g., GPU-based machines and vector processors, a different number of cores, size of memory, and different communication links to the servers are some examples.  

Resource sharing can be modeled as a non-cooperative  game \cite{owen1968game}, specifically, a congestion game \cite{rosenthal1973class} and even more specifically, a congestion game with agent-specific utilities \cite{milchtaich1996congestion} that arise when agents  are scattered in the physical world. 
These games are used to model a wide variety of applications including routing, load-balancing, and spectrum allocation \cite{pradelski2012learning, suri2004uncoordinated,  cheng2013opportunistic}. With game theory, it is possible to design adaptive agents whose actions optimize the system, even when based on partial and imprecise information.

The welfare of a game is defined as the sum of the rewards. 
Best response algorithms applied to congestion games converge to Pure Nash Equilibria (PNE), that may have low welfare. One way to deal with this problem is to bound the ratio between the best and worst PNE \cite{koutsoupias1999worst}. Another way is to search for an efficient PNE \cite{pradelski2012learning}. Yet another way is to perform the welfare maximization of congestion games in a central unit \cite{blumrosen2007welfare}. 
A fourth way is to simplify the discussion by not considering player-specific rewards \cite{cheng2013opportunistic}. 
Finally, the discussion can be restricted to a special class of congestion games called collision games, where all colliding players receive zero utility \cite{bistritz2018game}.

It has been shown recently that for collision-type  multi-agent multi-armed bandit problems, a distributed assignment problem can be solved without explicit information exchange between agents, as long as the number of resources is larger or equal to the number of agents. Examples include  the  distributed auction algorithm \cite{zafaruddin2019distributed}, swap-based algorithms for stable configurations \cite{avner2019multi, darak2019multi}, or to signal resource valuations \cite{boursier2019practical,bubeck2019non, tibrewal2019multiplayer} and the musical chairs algorithm \cite{rosenski2016multi}.  Similarly, algorithms for the adversarial case have  been studied in   \cite{alatur2020multi}. In these publications, it is assumed that agents who choose the same arm at the same time receive no reward.  In contrast to these works, motivated by applications in spectrum collaboration in ad-hoc wireless networks  \cite{bistritz2018distributed,  zafaruddin2019distributed,bubeck2019non,avner2019multi, tibrewal2019multiplayer, darak2019multi}, cloud computing \cite{tang2018dynamical} and machine scheduling  \cite{tang2018dynamical},  we  deal with the heavily congested regime together with utilities that depend non-linearly on the number of sharing agents.

In contrast to most previous works on multi-agent multi-armed bandit, when more than one of our agents choose the same arm at the same time they suffer a penalty, but still retain some of their reward. 
This occurs since the sharing of arms is done in a round-robin fashion where each agent receives an equal fraction of the time on the shared arm.
 
The arXiv preprint \cite{magesh2019multi} assumes that if more than $N^*$ agents are sharing a resource they all receive zero utility and therefore the total number of users is bounded by $N^*M$ (see Section II in \cite{magesh2019multi}).
This enables a lower bound on the sub-optimality gap which in turn determines the parameter $\varepsilon$.
%Without these assumptions the algorithm in \cite{magesh2019multi} cannot be implemented. 
In contrast, our algorithm does not rely on such assumptions. This fact results in  worse regret since we have no information on the optimality gap. Indeed the proof of our main theorem \ref{thm:TotRegBnd} more complicated.
% , and therefore, yields a more general result. 
% Furthermore, \cite{magesh2019multi} uses a clustering technique during exploration. 
% This results in significant increase in the training time, and can add bias to the estimation.
% Unfortunately, in \cite{magesh2019multi} there was no explicit analysis of this bias. 
% On the other hand we exploit the equal-time-sharing structure of our problem to develop a simple and elegant exploration phase that relies on a predefined order and allows each user to derive an unbiased estimator of the load. 

The work in \cite{bande2019multi} also considers non-zero utilities upon collision, but their scenario is greatly simplified compared to ours in two ways. Firstly, they deal with homogeneous case whereas we deal with the  heterogeneous case (their utilities are not agent-dependent whereas our are agent-dependent). Secondly, their agents have access to collision indicators that greatly aid in finding the solution whereas ours do not have  access to such indicators.

The rest of this paper is organized as follows: Section \ref{sec:Cooperative Repeated Congestion games} sets up the model formulation and describes the problem. Section \ref{sec: Learning to play repeated congestion games}  describes our novel  learning algorithm. Section \ref{sec:regret}
 performs a regret analysis of the learning algorithm.
In Section \ref{sec: simulation results}
we give numerical examples to illustrate the regret of the proposed algorithm compared to the distributed Upper Confidence Bound algorithm and random allocations. 
Some details of the proofs are provided in the appendix.

\section{Distributed Cooperative Sharing of Congested Resources}
\label{sec:Cooperative Repeated Congestion games}
In this section, we define the resource sharing problem. 
We assume that each resource is equally shared among the agents who choose it, e.g., via a round-robin mechanism. This is the simplest mechanism for sharing the resource when the resource is required continuously by all agents. 
Suppose $N$ agents are sharing $M$ resources where $N\gg M$. This makes the resource sharing much more challenging compared to  $N\leq M$; Yet this model is  important in spectrum sharing and cloud computing applications as discussed in Section~\ref{introduction}.

We assume that time is slotted with $t=1,2,\ldots,T$ indexing the time slots (discrete time).  and agents are synchronized to the slots. 
The number of time slots  $T$  is unknown to the agents. 
The single resource chosen by agent $n$ at time $t$ is denoted $m_n^t$. 
The \emph{allocation } at time $t$ is $$ \mathbf{m}^t = (m_1^t,...,m_N^t)$$ The \emph{load} experienced by agent $n$ at time $t$ under allocation  $\mathbf{m}^t$ is the number of agents who chose the same resource (including itself), i.e.,  
\begin{align}
    \ell_n^t = \ell_n(\mathbf{m}^t) \triangleq \sum_{k =1}^N \mathbb{1}(m_k^t = m_n^t).
\end{align}
The utility of agent $n$ with resource $m$ and load 1 is denoted by $U_{n,m,1}$ and is constant in time.
We assume that these utilities are non negative and bounded by $U_{\max}$. 
More generally, the utility of agent $n$ at time $t$ under  allocation $\mathbf{m}^t$ is
\begin{equation}
\label{eq:util_def}
    \begin{split}
        % U_n^t =
        U_n(\mathbf{m}^t) = 
        % & \; U_n\left(m_n^t, \ell_n\left(\mathbf{m}^t\right)\right) 
        % \\ =
        % & \;  
        % U_n\left(m_n^t, \ell_n^t\right) 
        % \quad =
        \frac{U_{n,m_n^t,1}}{\ell_n^t}.
    \end{split}
\end{equation}

The welfare $W$ at time $t$ with allocation $\mathbf{m}^t$ is the sum of the utilities over the $N$ agents:
\begin{equation}
    W^t = W(\mathbf{m}^t) \triangleq \sum _{n=1} ^N U_n(\mathbf{m}^t) = \sum _{n=1} ^N U_n^t . 
\end{equation}
The best and second-best welfares are denoted by 
\begin{equation}
    W^* \triangleq \displaystyle \max _{{\mathbf{m}} } W(\mathbf{m}) \qquad W^{**} \triangleq \displaystyle\max_{\mathbf{m} \neq \mathbf{m}^*} W (\mathbf{m})
\end{equation}
The sub-optimality gap is defined as:
\begin{equation}
    \rho \triangleq \frac{W^*-W^{**}}{2N}
\end{equation}
The optimal allocation  is
\begin{equation}
\label{eq:optimal_allocation_definition}
    \mathbf{m}^* \triangleq \underset{{\mathbf{m}} }{\arg \max} \;  W(\mathbf{m}).
\end{equation}
A crucial property of our model is that of {\em incomplete information}: agents are aware of the time-sharing structure of the utility function that is inversely proportional to their load, but they can not directly observe their utilities, and they do not know their unique utility function parameters $U_{n,m,1} \; \forall 1 \leq m \leq M$. Instead, they observe noisy versions of their utilities known as \emph{sample rewards}. The sample reward of agent $n$ at time $t$ with in allocation $\mathbf{m}^t$ is 
\begin{equation}
\begin{split}
    % r_n^t = 
    r_n(\mathbf{m}^t) 
    % &
    % = r_n^t(m_n^t,\ell_n(\mathbf{m}^t))
    % \\
    % & 
    % = r_n^t(m_n^t,\ell_n^t)
    = U_n(m_n^t) + \nu_n^t,
\end{split}
\end{equation}
where $\nu_n^t$ is zero-mean sub-Gaussian noise, i.i.d. in time and among agents with variance proxy $b$\footnote{Some of the basic properties of sub-Gaussian random variables used in this paper are mentioned in the appendix.}.

Since the utilities of the agents with each resource and load 1 are independently distributed continuous random variables, drawn once at the beginning of the sharing process, the optimal allocation  $\mathbf{m}^*$ is unique with probability 1.

The main performance metric of the above resource sharing processes is the {\em regret}: 
\begin{equation}
    R \triangleq T W^* - \mathbb{E}\left( \sum _{t=1} ^T W^t \right),
\end{equation}
where the expectation $\mathbb{E}$ is taken with respect to the randomness in the rewards as well as the agents' choices.

\section{Learning the optimal allocation}
\label{sec: Learning to play repeated congestion games}
How can the optimal allocation ${\bf m}^*$ defined in (\ref{eq:optimal_allocation_definition}) be learnt by the agents in a distributed way?
This section presents the {\em  Estimation, Negotiation, and Exploitation (ENE)} learning algorithm that achieves a regret that is poly-logarithmic in the number of time steps
\begin{equation*}
    R = O(\log_2^{3+\delta}(T))
\end{equation*}
where $0\leq \delta\leq 1$.
%The structure of the algorithm  follows the well-known theme of exploration vs. exploitation. 
%Since the time horizon is not known to the agents in advance agents must learn  throughout the sharing process. This is done with the aid of an epoch design.  
%a single long-enough period of learning at the beginning of the sharing process is not possible, since the agents do not know how long is "long enough". Instead, 
The algorithm divides the $T$ time-slots  of the sharing problem into $J$ epochs of dynamic length. 
% Each epoch is a little more than twice the length of its previous epoch. 
As in other related work, e.g.,  \cite{bistritz2018distributed, zafaruddin2019distributed,boursier2019practical,tibrewal2019multiplayer},  each epoch is further divided into phases whose length is also dynamic.

Our proposed  algorithm has three phases: Estimation, Negotiation, and Exploitation. 
\begin{compactenum}
\item
In the first phase each agent individually and distributedly {\em estimates} its utilities with any resource and any load. 
\item In the second phase agents {\em negotiate} over resources without direct communication.
\item In the third phase agents {\em exploit} the allocation they distributedly  decided upon in the previous phase. 
If the first and second phases were successful, the third phase is regret-free. 
\end{compactenum}

%The second phase induces an $\gre$-perturbed Markov chain whose stationary distribution gives probability 1 to the optimal allocation as $\gre\rarrow 0$. 
% In the third phase agents choose the resource they accessed most frequently during the tail of the second phase. 

Each phase is further divided into blocks. 
The purpose of most blocks is to average out the added sub-Gaussian noise. 
The number of blocks is different for different phases, and their size is also different.
Finally, the blocks are composed of time-steps, which are the underlying time-steps of the sharing process.
For example, the reward of agent $n$ with resource $m$ and load $\ell$ in epoch $j$, phase 2, block $k$, and time-step $\tau$ is denoted $r^{j,2,k,\tau}_{n,m,\ell} = U_{n,m,\ell} + \nu^{j,2,k,\tau}_{n}$ since the sub-Gaussian noise $\nu$ is independent and identically distributed in time and among agents while the utility is a function of $n,m$ and $\ell$ only.

{\bf Estimation phase:}
This phase has $M+1$ blocks. Each block has $j$ time-steps. In block $k$ where $1 \leq k \leq M$ agent $n$ accesses resource $k$ with probability 1. Agent $n$ then estimates its utility with resource $k$ and load $N$ as the average of its rewards from this block from all epochs until now:
\begin{equation}
\label{eq:estimate_u_n_m_N}
    \hat{U}^j_{n,k,N} = \overline{r}_n^{j,1,k} = \frac{\sum_{i=1}^j \sum_{\tau=1}^{i}r_n^{i,1,k,\tau} }{\frac{1}{2} j(j+1)} \overset{j\rightarrow\infty} {\longrightarrow} U_{n,k,N}
\end{equation}
In each time step of block $M+1$ each agent accesses the first resource with probability $1/2$. Agent $n$ denotes the  average of its rewards from this block from all epochs until now by\footnote{See appendix}
\begin{equation}
\label{eq:estimate_u_n_M_plus_1}
\begin{split}
    &\overline{r}_n^{j,1,M+1} =
    \\
    &\frac{
    \sum_{i=1}^j \sum_{\tau=1}^{i} 
    \left(
    \mathbb{1}
    \left(
    m_n^{j,1,M+1,\tau}=1
    \right)
    \cdot r_n^{i,1,M+1,\tau}
    \right)}
    {
    \sum_{i=1}^j \sum_{\tau=1}^{i} \;\,\mathbb{1}
    \left(
    m_n^{j,1,M+1,\tau}=1
    \right)
    \hfill
    } 
    \\
    &\overset{j\rightarrow\infty} {\longrightarrow}
    U_{n,1,N}\cdot 2 \left(1-\frac{1}{2^N}\right)
\end{split}
\end{equation}

Agent $n$ estimates $N$ in epoch $j$ to be
\begin{equation}
\label{eq:estimation_of_the_number_of_agents}
    \hat{N}^j_n=\frac{1}{\ln{(1/2)}}\ln{\left(1 - \frac{\overline{r}_n^{j,1,M+1}}{2\overline{r}_n^{j,1,1}}\right)} 
    \overset{j\rightarrow\infty} {\longrightarrow}
    N
\end{equation}
Agent $n$ then uses its estimate of the number of agents $\hat{N}^j_n$ together with its estimate of its individual utility with resource $m$ and maximal load $\hat{U}^j_{n,m,N}$  to estimate its utility with any load:
\begin{equation}
\label{eq:estimate_u_n_m_ell}
    \hat{U}^j_{n,m,\ell}
    =
    \frac{1}{\ell}\hat{N}\hat{U}^j_{n,m,N}
    \overset{j\rightarrow\infty} {\longrightarrow}
    U_{n,m,\ell}
\end{equation}
The allocation that maximizes the estimated utilities of epoch $j$ is denoted by
\begin{equation}
    \mathbf{m}^{*j} \triangleq \underset{{\mathbf{m}} }{\arg \max} \sum _{n=1} ^N \hat{U}^j_{n,m_n,\ell_n (\mathbf{m})}
    \label{eq:est_opt_alloc}
\end{equation}
The estimated optimal utilities are 
\begin{equation}
\label{eq:opt_est_util}
\begin{split}
    \hat{U}^{*j}_{n} \triangleq \hat{U}^j_{n,m_n^{*j},\ell_n (\mathbf{m}^{*j})}
    \\
    \mathbf{\hat{U}}^{*j} \triangleq \left(\hat{U}^{*j}_{1},...,\hat{U}^{*j}_{N}\right)
\end{split}
\end{equation}

{\bf Negotiation Phase:} 
In the heart of the ENE algorithm is the Negotiation Phase, inspired by \cite{marden2014achieving}. The Negotiation Phase of epoch $j$ is divided into $j^{1+\delta/3}$  load-estimation-blocks\footnote{We believe that using the significantly more complicated techniques from \cite{bistritz2020GOT} we can use only $j^{\delta/3}$ Negotiation Blocks, thereby reducing to total regret of the algorithm to $O(\log^{2+\delta} T)$.} that are each composed of $j^{1+\delta/3}$ time-steps.
Agent $n$ in block $k$ has a mood that is either Content or Discontent and denoted by $S_n^{j,2,k}\in\{C,D\}$.
A Content agent is stable while a Discontent agent is unstable. 
The probability of an individual agent to enter such a Content and Stable state increases with its estimated utility. 
Therefore, the probability of the community to enter an all-Content-all-stable state increases with the Welfare.
On the other hand, the probability to exit an all-Content-all-stable state is constant and independent of the Welfare. 
Hence, during the tail of the Negotiation Phase, the community will spend most of its time in an optimal all-Content-all-stable state with high probability. Agents can then count which resources they visit most frequently during the tail of the Negotiation Phase and use these during the Exploitation Phase. 

In the first block of the Negotiation Phase, all agents are Discontent. 
In block $k$ agent $n$ performs the following actions distributedly and individually without direct communication with its peers:
\begin{enumerate}
    \item 
    Choose a resource. A Discontent agents chooses a resource uniformly at random:
    \begin{equation}
    \mathbb{P}\left(m_n^{j,2,k}=a\right)
    =
    \frac{1}{M},\,\forall\; 1\leq a\leq A.\label{eq:content_action_choice}
    \end{equation}
    
    A Content agent chooses the same resource with high probability and will otherwise explore uniformly:
    \begin{equation}
        \mathbb{P}\left(m_n^{j,2,k}=a\right)
        =
        \Biggl\{\begin{array}{cc}
        1-\varepsilon^{c} &
        a=m_n^{j,2,k-1}\\
        \frac{\varepsilon^{c}}{A-1} &
        a\neq m_n^{j,2,k-1}
        \end{array}.\label{eq:discontent_action_choice}
    \end{equation}
    where $c>N$ is a parameter of algorithm 
    \ref{alg:Estimate, Negotiation, Exploitation}.
    \item Stay with this resource for the rest of this block and collect  $j^{1+\delta/3}$ i.i.d. reward samples.
    \item Averages these reward and denotes the average by $\overline{r}_n^{j,2,k}$.
    \item Estimate load based on the utility estimation from the previous phase:
    \begin{equation}
        \hat{\ell}_n^{j,2,k} \leftarrow{} \underset{ 1 \leq \ell \leq N }{\arg \min} \left| \overline{r}_n^{j,2,k} - \hat{U}^j_{n, m_n^{j,2,k},\ell}  \right|,
    \label{eq:estimate_load}
    \end{equation}
    \item 
    Estimate utility based on the load estimation $\hat{\ell}_n^{j,2,k}$ and the utility estimation of the previous phase:
    \begin{equation}
        \hat{U}_n^{j,2,k}  \leftarrow{} \hat{U}^j_{n, m_n^{j,2,k},\hat{\ell}_n^{j,2,k}}
    \label{eq:estimate_utility_based_on_load}
    \end{equation}
    \item
    Choose a new Mood. If an agent was previously Content $S_n^{j,2,k-1}=C$, and its action and estimated utility have not changed $m_n^{j,2,k}=m_n^{j,2,k-1} \land \hat{U}_n^{j,2,k}=\hat{U}_n^{j,2,k-1}$, then it will remain Content with probability 1:
    \begin{equation}
    \label{eq:content_cotent}
        C \rarrow C
    \end{equation}
    If an agent was previously Discontent $S_n^{j,2,k-1}=D$, or changed its resource or estimated utility from the previous block $m_n^{j,2,k} \neq m_n^{j,2,k-1} \lor \hat{U}_n^{j,2,k} \neq \hat{U}_n^{j,2,k-1}$, its new Mood is chosen according to the following probability:
    \begin{equation}
    [C/D] \rightarrow \Biggl\{\begin{array}{cc}
    C & \rm{w.p.} \;\; \varepsilon^{U_{\max}-\hat{U}_n^{j,2,k}}\\
    D & \rm{w.p.} \;\; 1 - \varepsilon^{U_{\max}-\hat{U}_n^{j,2,k}}
    \end{array}. \label{eq:new_mood}
    \end{equation}
\end{enumerate}
{\bf Exploitation Phase: } The third and final phase of the ENE algorithm has only one block and $2^j$ time-steps. Each agent chooses individually and distributedly the resource it visited most frequency during the tail of the last Negotiation Phase:
\begin{align}
\label{eq:exp_resource}
    m_n^{j,3} =    \underset{m}{\arg \max} \;\; \sum _{k = (1-\alpha) j^{1+\delta/3}} ^{ j^{1+\delta/3}} \mathbb{1}(m_n^{j,2,k}=a)
\end{align}
where $0<\alpha<1$. The agent then stays with this resource throughout the block, gathering rewards. If the first two phases were successful, this phase will be Regret free. 
The complete ENE method is described in Algorithm \ref{alg:Estimate, Negotiation, Exploitation}.
\begin{algorithm}[!ht]
	\caption{The Estimation, Negotiation, and Exploitation algorithm at the individual agent level, to be performed fully distributedly and without communication between agents}
	\label{alg:Estimate, Negotiation, Exploitation}
	\begin{algorithmic}[1]
	\STATE Input: $ \varepsilon>0, \alpha \in (0,1), \delta>0, c\geq N$

	\FOR{$j=1$ \TO $J$ epochs}
	
    \STATE \textbf{Payoff Estimation Phase}
	    
    \FOR{$m = 1$ \TO $M$}
    \FOR{$\tau$ \TO $j$}
        \STATE $m_n^{j,1,m,\tau} \leftarrow m$
    \ENDFOR
    \STATE Estimate $U^j_{n,m,N}$ according to \eqref{eq:estimate_u_n_m_N}.
    \ENDFOR
    \FOR{$\tau$ \TO $j$}
        \STATE $m_n^{j,1,M+1,\tau} = \Biggl\{\begin{array}{cc}
        1 & \text{ w.p. } 1/2\\ 
        \emptyset &\text{ w.p. } 1/2
        \end{array}$
    \ENDFOR
    \STATE Calculate $\overline{r}_n^{j,1,M+1}$ according to \eqref{eq:estimate_u_n_M_plus_1}.
    \STATE Estimate $N$ according to \eqref{eq:estimation_of_the_number_of_agents}.
    \STATE Estimate $U^j_{n,m,\ell} \; \forall 1 \leq m \leq M, 
    1\leq \ell \leq N$ according to  \eqref{eq:estimate_u_n_m_ell}.

	\STATE \textbf{Negotiation Phase} 
	
	\STATE $S_n^{j,2,0} \leftarrow D$
	
	\FOR{$k$ \TO $j^{1+\delta/3}$}
	\STATE Choose new resource $m_n^{j,2,k}$ according to \eqref{eq:content_action_choice} or \eqref{eq:discontent_action_choice}.
	\FOR{$\tau$ \TO $j^{1+\delta/3}$}
	\STATE $m_n^{j,2,k,\tau} \leftarrow m_n^{j,2,k}$ 
	\ENDFOR
	\STATE Calculate $\overline{r}_n^{j,2,k} \leftarrow \frac{1}{j^{1+\delta/3}}\sum_{\tau=1}^{j^{1+\delta/3}} r_n^{j,2,k,\tau}$.
	\STATE Estimate load $\hat{\ell}_n^{j,2,k}$ according to \eqref{eq:estimate_load}.
	\STATE Estimate utility $\hat{U}_n^{j,2,k} $ according to \eqref{eq:estimate_utility_based_on_load}.
	\ENDFOR
	
	\STATE Choose new Mood according to \eqref{eq:content_cotent} or \eqref{eq:new_mood}.

    \STATE \textbf{Exploitation Phase} 
    \STATE Choose resource $m_n^{j,3}$ according to \eqref{eq:exp_resource}.
    \FOR{$\tau$ \TO $2^j$}
    \STATE $m_n^{j,3,1,\tau} \leftarrow m_n^{j,3}$
    \ENDFOR 
    \ENDFOR
	\end{algorithmic} 
\end{algorithm}

%\columnbreak

\section{Regret analysis of ENE algorithm}
\label{sec:regret}
In this section, we analyze the expected regret of the ENE  Algorithm  \ref{alg:Estimate, Negotiation, Exploitation}. We present the main Theorem, whose proof follows via  a sequence of Lemmas bounding the probability of error for each error event. 
\begin{theorem}
\label{thm:TotRegBnd}
For the resource sharing problem specified in Section \ref{sec:Cooperative Repeated Congestion games} there exists a parameter\footnote{The choice of $\gre$ is related to the perturbed Markov chain used in the Negotiation Phase. Practically, we found that values between $10^{-3}$ and $0.1$ perform satisfactorily.}  $\varepsilon>0$ in algorithm \ref{alg:Estimate, Negotiation, Exploitation} such that the regret of the ENE algorithm is upper-bounded by $O\left( \log_2^{3+\delta}(T)\right)$.
\end{theorem}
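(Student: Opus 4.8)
The plan is to split the regret across the $J$ epochs and, inside each epoch, across the three phases, showing that the Estimation and Negotiation phases are short enough to contribute only $O(\log_2^{3+\delta}T)$ regret by themselves, while the long Exploitation phases are regret-free except on events of super-polynomially small probability, so their total contribution is $O(1)$. Write $L_j=j^{1+\delta/3}$ for both the number of Negotiation blocks and the block length in epoch $j$; epoch $j$ then has length $(M+1)j+L_j^2+2^j$, so after $J$ epochs $T=\Theta(2^J)$ and $J=\log_2 T+O(1)$. Since $W^t\in[0,W^*]$ with $W^*\le NU_{\max}$, the regret incurred during the Estimation and Negotiation phases of epoch $j$ is at most $NU_{\max}\bigl((M+1)j+L_j^2\bigr)=O(j^{2+2\delta/3})$, so summing over $j\le J$ these phases contribute $O(J^{3+2\delta/3})=O(\log_2^{3+2\delta/3}T)$, which already lies within the claimed $O(\log_2^{3+\delta}T)$. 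It remains to bound the Exploitation-phase regret.

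Fix $\varepsilon>0$ small, depending only on the instance parameters $N,M,U_{\max}$, the sub-optimality gap $\rho$, and the smallest load gap $\Delta_0:=\min_{n,m,\,1\le\ell<\ell'\le N}\bigl|U_{n,m,\ell}-U_{n,m,\ell'}\bigr|>0$. For epoch $j$ let $\mathcal G_j$ be the event that (a) every estimate $\hat U^j_{n,m,\ell}$ of \eqref{eq:estimate_u_n_m_ell} is within $\tfrac13\min(\rho,\Delta_0)$ of $U_{n,m,\ell}$, and (b) every load estimate $\hat\ell_n^{j,2,k}$ of \eqref{eq:estimate_load} over all Negotiation blocks equals the true load. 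On (a) the maximiser in \eqref{eq:est_opt_alloc} is $\mathbf m^{*j}=\mathbf m^*$, and on $\mathcal G_j$ the mood process $\{S_n^{j,2,k}\}_k$ is exactly the perturbed Markov chain of \cite{marden2014achieving} with perturbation $\varepsilon$. Because the estimates in (a) are empirical means of $\Theta(j^2)$ i.i.d.\ sub-Gaussian (variance proxy $b$) samples, a sub-Gaussian tail bound together with propagation of the error through the smooth maps \eqref{eq:estimation_of_the_number_of_agents}--\eqref{eq:estimate_u_n_m_ell} gives $\mathbb{P}(\text{(a) fails})\le e^{-\Theta(j^2)}$; each $\overline r_n^{j,2,k}$ is a mean of $L_j$ i.i.d.\ samples, so a union bound over the $L_j$ blocks gives $\mathbb{P}(\text{(b) fails}\mid\text{(a)})\le L_j e^{-\Theta(\Delta_0^2 L_j)}$. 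Hence $\mathbb{P}(\mathcal G_j^c)\le e^{-\Omega(L_j)}$.

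On $\mathcal G_j$ the perturbed chain over $L_j$ blocks is analysed by the resistance-tree characterisation of stochastically stable states: the Content/Discontent feedback in \eqref{eq:content_action_choice}--\eqref{eq:new_mood} is designed (following \cite{marden2014achieving}, using the parameter $c>N$) so that $(\mathbf m^*,\text{all Content})$ is the unique minimiser of the stochastic potential, hence the unique stochastically stable configuration; for $\varepsilon$ small enough its stationary mass exceeds $\tfrac34$, and the chain's mixing time $t_{\mathrm{mix}}(\varepsilon)$ is a constant independent of $j$. For $j$ large, $(1-\alpha)L_j\gg t_{\mathrm{mix}}(\varepsilon)$, so the chain is within total variation $\tfrac18$ of stationarity at the start of the tail window, and a concentration inequality for occupation measures of finite Markov chains shows that the fraction of the $\alpha L_j$ tail blocks spent in $(\mathbf m^*,\text{all Content})$ exceeds $\tfrac12$ except with probability $e^{-\Omega(L_j)}$; on that event, for every agent $n$ simultaneously $m_n^*$ is visited strictly more often than any other resource during the tail, so \eqref{eq:exp_resource} gives $m_n^{j,3}=m_n^*$ and the Exploitation phase of epoch $j$ plays $\mathbf m^*$ throughout, with zero regret. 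Writing $q_j$ for the probability that epoch $j$'s Exploitation phase is not regret-free, we get $q_j\le\mathbb{P}(\mathcal G_j^c)+e^{-\Omega(L_j)}=e^{-\Omega(j^{1+\delta/3})}$ (when $\delta=0$ the choice of $\varepsilon$ must additionally make this rate exceed $\ln 2$), so the total Exploitation-phase regret is $\le NU_{\max}\sum_{j\ge1}q_j2^j=O(1)$ since $q_j2^j$ is summable. Adding the three contributions, $R=O(J^{3+2\delta/3})+O(1)=O\bigl(\log_2^{3+\delta}(T)\bigr)$.

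The main obstacle is the Negotiation-phase step: one must show that the designed perturbed chain has $(\mathbf m^*,\text{all Content})$ as its unique stochastically stable state with stationary mass bounded away from $\tfrac12$ uniformly over the random instance, and then turn this asymptotic ($\varepsilon\to0$) statement into a non-asymptotic $e^{-\Omega(L_j)}$ bound on the deviation of the tail occupation measure — which requires quantifying the mixing time in terms of $\varepsilon$ and handling the fact that the chain is only the intended one on the event that load estimation succeeds in every single block.
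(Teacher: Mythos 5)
Your proposal is correct and follows essentially the same route as the paper: bound the Estimation and Negotiation phases by their total length $\sum_j O(j^{2+2\delta/3})$ with $J\le\log_2 T$, and make the Exploitation phases regret-free except on failure events whose probabilities (utility/$N$-estimation error via sub-Gaussian concentration, per-block load-estimation error via a union bound, and insufficient occupation of the stochastically stable all-Content optimal state via Marden-type interdependence plus a Markov-chain occupation-measure concentration bound) decay super-exponentially in $j$, so that $2^j$-length exploitation blocks contribute $O(1)$ per epoch. The only differences are cosmetic (e.g., stationary mass $3/4$ vs.\ the paper's $2/3$, and slightly different exponents in the intermediate tail bounds), and your closing caveat about conditioning on error-free load estimation is exactly what the paper formalizes with its modified chain $\tilde{\mathcal{M}}^j$.
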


%{\em Remark}:  
%The choice $\gre$ is related to the perturbed Markov chain used in the ENE algorithm. Practically, we found that values between $10^{-3}$ and $0.1$ perform satisfactorily.

%The parameter $\varepsilon$ is used in the Negotiation Phase and can be tuned by the system designer. 
%Smaller  $\varepsilon$ is more suited for longer sharing horizons $T$ since it makes the algorithm converge slower but is more likely to be sufficient such that the optimal allocation is indeed found and exploited. 

\begin{proof}
Let $R_1,R_2$ and $R_3$ denote the accumulated regret from the Estimation, Negotiation, and Exploitation phases of all the epochs of the algorithm, respectively, such that $R=R_1+R_2+R_3$.
Recall that Algorithm \ref{alg:Estimate, Negotiation, Exploitation} operates over $J$ epochs. By  Lemma \ref{lem:03} $R_3= O\left(J\right)$ and $R_1+R_2$ is upper bounded by
\begin{equation}
\label{eq:22}
\begin{split}
%R_1 +R_2  \leq 
& NU_{\max}\sum_{j=1}^J \left( 
j(M+1)+j^{ 2+2\delta/3 } \right)
= O \left( J^{3+\delta}  \right)
\end{split}
\end{equation}
Furthermore, according to Lemma \ref{lem:num_epo_bound} $J\leq \log_2(T)$.
\end{proof}
Having sub-linear regret means that the ratio between the amount of time spent on sub-optimal allocations and the amount of time spent on the optimal allocation approaches zero as $T\rarrow\infty$. Because the sharing process described in Section \ref{sec:Cooperative Repeated Congestion games} is a variation on a single-agent multi-armed bandit problem, the optimal regret for this problem is $O(\log_2(T))$ according to \cite{lai1985asymptotically} and not very far from ours. 

\begin{lemma}
\label{lem:num_epo_bound} The number of epochs $J$ that Algorithm~\ref{alg:Estimate, Negotiation, Exploitation} operates satisfies
$E< \log_2(T)$
%The number of epochs $J$ is upper bounded by $O(\log_2(T))$.
\end{lemma}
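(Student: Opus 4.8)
The plan is to exploit that epochs grow geometrically in length, so only $O(\log T)$ of them can fit into the time horizon. First I would compute the length (in time-steps) of epoch $j$ by adding up its three phases as specified in Algorithm~\ref{alg:Estimate, Negotiation, Exploitation}: the Estimation Phase contributes $(M+1)\,j$ time-steps (its $M+1$ blocks each have $j$ time-steps), the Negotiation Phase contributes $j^{2+2\delta/3}$ time-steps (its $j^{1+\delta/3}$ blocks each have $j^{1+\delta/3}$ time-steps), and the Exploitation Phase contributes $2^{j}$ time-steps. In particular the length of epoch $j$ is at least $2^{j}$, and since $M\ge 1$ forces the Estimation Phase to be nonempty it is in fact strictly larger than $2^{j}$.

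Next I would sum these lengths over the epochs the algorithm actually runs. If the algorithm operates $J$ epochs, then all of epochs $1,\dots,J$ take place within the first $T$ time-steps, so in particular the Exploitation Phase of epoch $J$ --- which occupies $2^{J}$ consecutive time-steps --- fits inside the horizon; together with the strictly positive contribution of the earlier phases this gives $2^{J} < T$, hence $J < \log_2 T$. Equivalently, summing the full geometric series $\sum_{j=1}^{J}2^{j}=2^{J+1}-2 < T$ and using $T\ge 2$ yields the same bound.

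I do not expect any genuine obstacle here: the argument is a short bookkeeping computation on phase lengths followed by a geometric-series estimate. The only point that deserves a word of care is the final epoch, which may be truncated by the horizon; in that case one argues with epochs $1,\dots,J-1$ completed plus at least one time-step of epoch $J$, which changes the estimate by at most an additive constant and still gives $J<\log_2 T$ once $T$ exceeds a small absolute constant (the statement being vacuous for smaller $T$, where not even one epoch completes).
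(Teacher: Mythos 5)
Your proposal is correct and follows essentially the same route as the paper: the paper likewise bounds $T$ from below by the geometrically growing exploitation-phase lengths, ignoring the last epoch and the other phases to get $T\geq\sum_{j=1}^{J-1}2^{j}=2^{J}-2$, whence $J=O(\log_2 T)$. Your version is just slightly more careful bookkeeping (keeping the Estimation/Negotiation contributions and treating the truncated final epoch explicitly), which is what actually yields the strict inequality $J<\log_2 T$ as stated.
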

\begin{proof}
Ignoring the last epoch and the durations of the Estimation and Negotiation Phases produces $T\geq \sum_{j=1}^{J-1}  2^{j} = (2^J-2)$. 
\end{proof}
\begin{lemma}
$R_3 = O(J)$.
\label{lem:03}
%The regret accumulated during all the exploitation phases of all the epochs of the algorithm is upper bounded by $O(J)$.
\end{lemma}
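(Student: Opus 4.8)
The plan is to show that each of the $J$ Exploitation Phases contributes at most a constant amount to the regret, so that their sum is $O(J)$. Recall that the Exploitation Phase of epoch $j$ consists of a single block of $2^j$ time-steps during which every agent stays on the resource $m_n^{j,3}$ it visited most frequently in the tail of the Negotiation Phase. The per-step regret is $W^* - W(\mathbf{m}^{j,3})$, which is bounded by $W^* \le N U_{\max}$ always; so a trivial bound gives regret $N U_{\max} 2^j$ per epoch, which is far too large. The point is that with high probability the Negotiation Phase succeeds and $\mathbf{m}^{j,3} = \mathbf{m}^*$, in which case the epoch is regret-free. So I would split the expected regret of epoch $j$'s Exploitation Phase as
\begin{equation}
\label{eq:R3split}
\expect{W^* - W(\mathbf{m}^{j,3})} \le N U_{\max}\,2^j\cdot \mathbb{P}\big(\mathbf{m}^{j,3}\neq \mathbf{m}^*\big),
\end{equation}
and the whole task reduces to showing $\mathbb{P}(\mathbf{m}^{j,3}\neq \mathbf{m}^*) = O(2^{-j})$, or indeed anything summable that kills the $2^j$ factor with room to spare.

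The error event $\{\mathbf{m}^{j,3}\neq \mathbf{m}^*\}$ decomposes into (i) the Estimation Phase of epoch $j$ producing wrong utility/agent-count estimates, so that $\mathbf{m}^{*j}\neq\mathbf{m}^*$ or the load-estimation step \eqref{eq:estimate_load} is unreliable, and (ii) the perturbed Markov chain of the Negotiation Phase failing to spend a large-enough majority of its tail blocks in the optimal all-Content state, so that the frequency-count rule \eqref{eq:exp_resource} picks the wrong resource. I would invoke the error-probability Lemmas that precede this one in the paper (the ones bounding the failure probability of the Estimation Phase via sub-Gaussian concentration over $\tfrac12 j(j+1)$ samples, and the one bounding the negotiation failure probability) — the paper's abstract and introduction promise that the negotiation error probability decays "sup-exponentially" in $j$, i.e. faster than $e^{-cj}$ for every $c$, because the Negotiation Phase of epoch $j$ uses $j^{1+\delta/3}$ blocks of $j^{1+\delta/3}$ steps each. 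Plugging such a super-exponential (or even merely exponential with rate $> \ln 2$) bound into \eqref{eq:R3split} makes the per-epoch contribution $O(1)$ — in fact $o(1)$ — and summing over $j=1,\dots,J$ gives $R_3 = O(J)$.

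The main obstacle is purely a matter of bookkeeping the dependencies: the Estimation and Negotiation error probabilities must be shown to decay strictly faster than $2^{-j}$ so that they dominate the $2^j$ block length, and one must be careful that the estimates $\hat U^j$ are built from the \emph{cumulative} samples of all epochs up to $j$ (as in \eqref{eq:estimate_u_n_m_N}), which only helps. I expect the cleanest route is: cite the Estimation-Phase concentration Lemma to get failure probability $\le e^{-\Omega(j^2)}$, cite the Negotiation-Phase Lemma to get failure probability super-polynomially (indeed super-exponentially) small in $j$, conclude $\mathbb{P}(\mathbf{m}^{j,3}\neq\mathbf{m}^*) 2^j \to 0$ and is summable, and hence $R_3 \le \sum_{j=1}^J N U_{\max} 2^j \mathbb{P}(\mathbf{m}^{j,3}\neq\mathbf{m}^*) \le \sum_{j=1}^J O(1) = O(J)$.
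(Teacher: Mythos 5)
Your proposal is correct and follows essentially the same route as the paper: bound the Exploitation-Phase regret of epoch $j$ by $NU_{\max}2^j\,\mathbb{P}(\mathbf{m}^{j,3}\neq\mathbf{m}^*)$, decompose that error event into Estimation- and Negotiation-Phase failures, and invoke the corresponding concentration lemmas (the paper's bounds $O(\exp(-j^{1.4}))$ and $O(\exp(-j^{1+\delta/4}))$ indeed decay faster than $2^{-j}$), so each epoch contributes $O(1)$ and $R_3=O(J)$. This matches the paper's proof of Lemma \ref{lem:03} via Lemma \ref{lem:exp_fail}.
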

\begin{proof}
The exploitation phase of epoch $j$ will accumulate regret  only if the following error event occurred:
\begin{equation}
\label{eq:err_1}
E^{j,3}: \mathbf{m}^{j,3} \neq \mathbf{m}^*
\end{equation}
%no exploitation phase error occurred 
That regret is upper bounded by $NU_{\max}2^j$. Therefore:
\begin{equation}
    R_{j,3} \leq NU_{\max}2^j \mathbb{P}\left(E^{j,3}\right)
\end{equation}
According to Lemma \ref{lem:exp_fail} the probability $\mathbb{P}\left(E^{j,3}\right)$ is $O(\exp(-j^{1+\delta/4}))$. Hence, $R_{j,3} = O(1)$. Finally, $R_3 = \sum_{j=1}^J R_{j,3} = O(J)$.
\end{proof}
\begin{lemma}
\label{lem:exp_fail}
$\mathbb{P}(E^{j,3})= O(\exp(-j^{1+\delta/4}))$.
%The probability of an exploitation phase failure in epoch $j$ is $O(\exp(-j^{1+\delta/4}))$.
\end{lemma}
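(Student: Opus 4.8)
The plan is to bound $\mathbb{P}(E^{j,3})$ by decomposing the error event $E^{j,3}:\mathbf{m}^{j,3}\neq\mathbf{m}^*$ into three contributing error events coming from the three phases of epoch $j$, and then showing each has probability $O(\exp(-j^{1+\delta/4}))$. The three events are: (i) $E^{j,1}$, the event that the Estimation Phase of epoch $j$ produces estimates $\hat{U}^j_{n,m,\ell}$ that are not uniformly (over $n,m,\ell$) within $\rho/2$ of the true utilities $U_{n,m,\ell}$ — in particular this guarantees $\mathbf{m}^{*j}=\mathbf{m}^*$ and that the load-estimation step \eqref{eq:estimate_load} in the Negotiation Phase correctly recovers the load from any block's average reward; (ii) $E^{j,2}$, the event that, conditioned on the Estimation Phase succeeding, the perturbed Markov chain of the Negotiation Phase does not spend at least a $(1-\alpha/2)$-fraction (say) of its tail $[(1-\alpha)j^{1+\delta/3},\,j^{1+\delta/3}]$ of blocks in the unique optimal all-Content-all-stable state; (iii) $E^{j,2'}$, the event that within some block of the tail the empirical reward average $\overline r_n^{j,2,k}$ deviates enough from $U_n(\mathbf m^{j,2,k})$ to cause a load/utility misestimation and hence a spurious mood transition. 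Then $\mathbb{P}(E^{j,3})\leq \mathbb{P}(E^{j,1})+\mathbb{P}(E^{j,2}\mid \overline{E^{j,1}})+\mathbb{P}(E^{j,2'}\mid \overline{E^{j,1}})$, and the union over the tail blocks costs only a factor $j^{1+\delta/3}$.

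First I would handle (i) and (iii), which are the standard concentration steps. In epoch $j$, block $k$ of the Estimation Phase aggregates $\tfrac12 j(j+1)=\Theta(j^2)$ i.i.d.\ sub-Gaussian samples (summed over all epochs up to $j$), so by the sub-Gaussian tail bound from the appendix, $\mathbb{P}(|\hat U^j_{n,k,N}-U_{n,k,N}|>\eta)\leq 2\exp(-\Theta(\eta^2 j^2/b))$; the estimate $\hat N^j_n$ and $\hat U^j_{n,m,\ell}$ are smooth (Lipschitz on the relevant range, bounded away from the singularities of $\ln$ and division) functions of these averages, so a fixed target accuracy $\eta=\rho/2$ translates into the same $\exp(-\Theta(j^2))$ bound after a union over the $O(NM)$ coordinates. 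Since $j^2$ dominates $j^{1+\delta/4}$ for $\delta\le 4$, this is $O(\exp(-j^{1+\delta/4}))$. For (iii), each tail block has $j^{1+\delta/3}$ i.i.d.\ samples, giving $\mathbb{P}(|\overline r_n^{j,2,k}-U_n|>\rho/2)\le 2\exp(-\Theta(j^{1+\delta/3}))$; union over the $\le j^{1+\delta/3}$ tail blocks and $N$ agents still leaves $O(\exp(-\Theta(j^{1+\delta/3})))=O(\exp(-j^{1+\delta/4}))$ since $1+\delta/3>1+\delta/4$.

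The main obstacle is (ii): controlling how long the perturbed Markov chain over mood/action profiles stays in the optimal state. Here I would invoke the resistance-tree / stochastically-stable-states machinery of \cite{marden2014achieving}: conditioned on accurate utility estimates, the chain with perturbation parameter $\varepsilon$ has the all-Content-all-stable profile on $\mathbf m^*$ as its unique stochastically stable state (this is exactly where $\rho>0$, i.e.\ uniqueness of $\mathbf m^*$, and the choice $c>N$ enter — they make the welfare-ordering of resistances strict and ensure recurrence of the Discontent-exploration dynamics). The subtlety is that the Negotiation Phase is a \emph{finite} run of $j^{1+\delta/3}$ blocks, not a stationary chain, so I need a quantitative mixing/hitting-time statement: after $\varepsilon$ is fixed small enough that the stationary mass on the optimal state exceeds $1-\alpha/4$ and the spectral gap is a constant, a run of length $\Theta(\alpha\, j^{1+\delta/3})$ in the tail fails to average $(1-\alpha/2)$-occupation on the optimal state only with probability $\exp(-\Theta(j^{1+\delta/3}))$, by a Markov-chain Chernoff (Gillman-type) bound. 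This again beats $\exp(-j^{1+\delta/4})$. Combining the three bounds and noting $1+\delta/3\ge 1+\delta/4$ and $2\ge 1+\delta/4$ for $0\le\delta\le1$ yields $\mathbb{P}(E^{j,3})=O(\exp(-j^{1+\delta/4}))$, completing the proof; the details of the resistance computation and of choosing $\varepsilon$ I would defer to a separate lemma (the analogue of the Negotiation-Phase analysis referenced in the algorithm description).
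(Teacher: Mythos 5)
Your proposal follows essentially the same route as the paper: it likewise decomposes $E^{j,3}$ into an Estimation-Phase failure ($\mathbf{m}^{*j}\neq\mathbf{m}^{*}$) and a Negotiation-Phase failure ($\mathbf{m}^{j,3}\neq\mathbf{m}^{*j}$), the latter split further into a load-estimation error and an insufficient-mixing event, and bounds them exactly as you do — sub-Gaussian Chernoff bounds with union bounds for the estimation and load errors, the stochastic-stability result of \cite{marden2014achieving} for the stationary mass of the optimal all-Content state, and the Markov-chain Chernoff--Hoeffding bound of \cite{ChungMine} for the occupation of the tail blocks (these steps are carried out in Lemmas \ref{lem:est_fail}--\ref{lem:insuff_mix_time}). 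The only deviations are cosmetic choices of constants (the paper uses thresholds $\rho$ and $\rho/N$ for the estimates and a load-spacing margin $\Phi$ rather than your $\rho/2$, and an occupation threshold of $1/2$ with $\pi^*>2/3$), none of which affects the asymptotic bound.
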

\begin{proof}
Estimation and Negotiation Phase failures are denoted by
\begin{equation}
    \label{eq:est_fail}
    E^{j,1}: \mathbf{m}^{*j}\neq \mathbf{m}^{*}
\end{equation}
\begin{equation}
    \label{eq:neg_fail}
    E^{j,2}: \mathbf{m}^{j,3}\neq \mathbf{m}^{*j}
\end{equation}
The probabilities of these are bounded by $O(\exp(-j^{1.4}))$ and $O(\exp(-j^{1+\delta/4}))$, respectively, according to 
Lemmas \ref{lem:est_fail} and \ref{lem:neg_fail}, respectively.
Furthermore, 
\begin{equation}
    \mathbb{P}(E^{j,3})\leq \mathbb{P}(E^{j,1})+\mathbb{P}(E^{j,2})
\end{equation}
\end{proof}
\begin{lemma}
\label{lem:est_fail}
%The probability of an Estimation Phase Failure of  \eqref{eq:Estimation_Failure} is 
$\mathbb{P}(E^{j,1}) = O\left(\exp\left(-j^{1.4}\right)\right)$.
\end{lemma}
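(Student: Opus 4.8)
The plan is to bound the probability that the estimated-optimal allocation $\mathbf{m}^{*j}$ (defined in \eqref{eq:est_opt_alloc}) differs from the true optimal allocation $\mathbf{m}^*$. Since both allocations are argmaxes of welfare functions — one computed with the true utilities $U_{n,m,\ell}$, the other with the estimates $\hat U^j_{n,m,\ell}$ — a sufficient condition for $\mathbf{m}^{*j}=\mathbf{m}^*$ is that every estimate $\hat U^j_{n,m,\ell}$ is within $\rho$ of the true value $U_{n,m,\ell}$, where $\rho = (W^*-W^{**})/(2N)$ is the sub-optimality gap: if all $N$ per-agent estimates are off by less than $\rho$, then the estimated welfare of any allocation is off by less than $N\rho = (W^*-W^{**})/2$, which is too small to let a suboptimal allocation overtake $\mathbf{m}^*$. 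So I would first reduce $E^{j,1}$ to the union, over $n$, $m$, and $\ell$, of the events $\{|\hat U^j_{n,m,\ell} - U_{n,m,\ell}| \geq \rho\}$, and then control each of these.

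The second step is to trace the estimation pipeline \eqref{eq:estimate_u_n_m_N}--\eqref{eq:estimate_u_n_m_ell}. The quantities $\hat U^j_{n,k,N}$ and $\overline r_n^{j,1,M+1}$ are empirical averages of, respectively, $\tfrac12 j(j+1)$ and (in expectation) a comparable number of i.i.d.\ sub-Gaussian samples with variance proxy $b$; by the sub-Gaussian tail bound (Hoeffding, as recalled in the appendix) each deviates from its mean by more than a fixed constant with probability $O(\exp(-c' j^2))$ for a suitable $c'>0$. For the $M+1$-st block a small extra step is needed: the number of samples that actually land on resource $1$ is itself random (Binomial), so I would first invoke a Chernoff bound to say that at least, say, a quarter of the $\tfrac12 j(j+1)$ slots land on resource $1$ except with probability $O(\exp(-c''j^2))$, and condition on that event before applying the sub-Gaussian bound to the conditional average. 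Then $\hat N^j_n$ in \eqref{eq:estimation_of_the_number_of_agents} is a fixed smooth (locally Lipschitz) function of the two averages $\overline r_n^{j,1,1}$ and $\overline r_n^{j,1,M+1}$ on a neighbourhood of their limiting values, bounded away from the singularities of $\ln$ and of the reciprocal; hence an $O(j^{-0.7})$-scale deviation in the averages produces an $O(j^{-0.7})$-scale deviation in $\hat N^j_n$, which for large $j$ is smaller than $1/2$, so that $\hat N^j_n$ rounds to the integer $N$. Finally \eqref{eq:estimate_u_n_m_ell} combines $\hat N^j_n$ (now exactly $N$ on the good event) with $\hat U^j_{n,m,N}$ linearly, so the error in $\hat U^j_{n,m,\ell}$ is $O(1)$ times the error in $\hat U^j_{n,m,N}$, which is below $\rho$ once $j$ exceeds a threshold $j_0(\rho,N,U_{\max},b)$.

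Putting the pieces together: for $j \ge j_0$ the event $E^{j,1}$ is contained in a union of $O(NM \cdot N)$ sub-Gaussian/Chernoff tail events, each of probability $O(\exp(-c\, j^2))$ for some $c>0$; a union bound gives $\mathbb{P}(E^{j,1}) = O(\exp(-c j^2))$, which is certainly $O(\exp(-j^{1.4}))$. For the finitely many $j<j_0$ the bound $O(\exp(-j^{1.4}))$ holds after adjusting the implied constant, since probabilities are at most $1$. I expect the main obstacle to be the honest bookkeeping around the nonlinear estimator $\hat N^j_n$: one must exhibit an explicit neighbourhood of $(U_{n,1,N},\,2U_{n,1,N}(1-2^{-N}))$ on which the map defining $\hat N$ is Lipschitz with a controlled constant (in particular bounded away from the pole of $\ln$ and from the point where the reciprocal blows up), and verify that the concentration scale $j^{-0.7}$ eventually fits inside that neighbourhood and below the rounding margin $1/2$ — a chain of "for $j$ large enough" statements whose thresholds all depend only on the fixed problem instance, hence are harmless for an asymptotic $O(\cdot)$ bound but must be laid out carefully.
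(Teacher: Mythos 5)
Your proposal is correct and follows essentially the same route as the paper: reduce $E^{j,1}$ via the sub-optimality gap $\rho$ to the event that some estimate $\hat U^j_{n,m,\ell}$ is off by at least $\rho$ (you argue this directly through the welfare perturbation bound $N\rho=(W^*-W^{**})/2$; the paper outsources it to Lemma 1 of \cite{bistritz2018distributed}), then split into exact recovery of $N$ and concentration of $\hat U^j_{n,m,N}$ to within $\rho/N$ via \eqref{eq:estimate_u_n_m_ell}, and finish with a union bound over $n,m$. The main difference is in handling the nonlinear estimator \eqref{eq:estimation_of_the_number_of_agents}: the paper (Lemmas \ref{lem:incorrect_estimation_of_N} and \ref{lem:01}, plus the appendix algebra) derives an explicit relative-deviation threshold $2^{-N-4}$ on the two block averages that forces correct rounding of $\hat N^j_n$, and it only conditions on $\xi^{j,1,M+1}_n\ge j^{1.5}/2$ active slots, which yields an $\exp(-c\,j^{1.5})$ tail (hence the stated $1.4$ exponent); your softer local-Lipschitz argument for the map $(\overline r_n^{j,1,1},\overline r_n^{j,1,M+1})\mapsto \hat N^j_n$, combined with conditioning on a constant fraction ($\sim j^2/8$) of active slots, is equally valid and in fact gives the stronger $\exp(-c\,j^2)$ per-event bound, at the cost of leaving the neighbourhood and Lipschitz constant implicit (which you acknowledge as the bookkeeping burden). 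One cosmetic slip to fix: the sentence about an ``$O(j^{-0.7})$-scale deviation'' is inconsistent with the rest of your argument --- if $j^{-0.7}$ were literally the deviation threshold fed into the sub-Gaussian tail bound with $\Theta(j^2)$ samples, you would only obtain $\exp(-c\,j^{0.6})$, which does not imply $O(\exp(-j^{1.4}))$; what you need, and what your opening and closing steps correctly use, is a fixed instance-dependent constant threshold (small enough that the induced error in $\hat N^j_n$ is below $1/2$ and the error in $\hat U^j_{n,m,N}$ is below $\rho/N$), giving $O(\exp(-c\,j^2))$ per event and hence the lemma.
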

\begin{proof}
Let us define the following errors events:
\begin{align}
    & \tilde{E}^{j,1}: 
    \underset{n,m,\ell}{\max}  \left| U_{n,m,\ell}  - \hat{U}^j_{n,m,\ell} \right|  \geq \rho \label{}
    \\
    & \tilde{E}^{j,1}_N: \exists n : \hat{N}_n^j \neq N \label{}
    \\
    & \tilde{E}^{j,1}_U: \underset{n,m}{\max}  \left| U_{n,m,N}  - \hat{U}^j_{n,m,N} \right|  > \frac{\rho}{N} \label{eq:30}
\end{align}
We bound $\mathbb{P}\left(E^{j,1}\right)$ as follows:
\begin{equation}
\label{eq:sufficient_utility_estimation}
    \mathbb{P}(E^{j,1}) 
    \overset{(a)}{\leq}
    \mathbb{P}\left(\tilde{E}^{j,1}
    \right)
    \overset{(b)}{\leq}
    \mathbb{P}\left(\tilde{E}^{j,1}_N
    \right)
    +
    \mathbb{P}\left(\tilde{E}^{j,1}_U
    \right) 
\end{equation}
where (a) is a simple modification of Lemma (1) in \cite{bistritz2018distributed}, and (b) is clear from \eqref{eq:estimate_u_n_m_ell}. The probability $\mathbb{P}\left(\tilde{E}^{j,1}_N
\right)$ is $O(\exp(-j^{1.4}))$ according to Lemma \ref{lem:incorrect_estimation_of_N}. 
We bound $\mathbb{P}\left(\tilde{E}^{j,1}_U\right)$ as follows:
\begin{equation}
\label{eq:01}
\begin{split}
    & \mathbb{P}\left(\tilde{E}^{j,1}_U\right) 
    \overset{(a)}{\leq}
    NM
    \mathbb{P} \left( \left| U_{n,m,N}  - \hat{U}^j_{n,m,N} \right|  > \frac{\rho}{N}
    \right)
    \\ & 
    \overset{(b)}{\leq}
    2NM  \exp\left(- \frac{1}{2b}\left(\frac{\rho}{N}\right)^2
    \frac{j^2+j}{2} \right) 
    \\ & 
    \overset{(c)}{\leq}
    2NM  \exp\left(- \frac{j^2}{4b}\left(\frac{\rho}{N}\right)^2 
    \right) = O(\exp(-j^{1.5}))
\end{split}
\end{equation} 
Where (a) is a union bound on the agents and resources, (b) is Chernoff's inequality for the average of i.i.d. sub-Gaussian random variables, and (c) holds for any positive $j$.

\end{proof}
\begin{lemma}
\label{lem:incorrect_estimation_of_N}
$\mathbb{P}\left(\tilde{E}^{j,1}_N
    \right) = O(\exp(-j^{-1.4}))$.
\end{lemma}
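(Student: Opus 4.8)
The plan is to derive the bound from concentration of the two empirical averages that feed into \eqref{eq:estimation_of_the_number_of_agents}, namely $\overline r_n^{j,1,1}$ and $\overline r_n^{j,1,M+1}$, combined with continuity of the map that produces $\hat N_n^j$. Write $g(x,y)=\frac{1}{\ln(1/2)}\ln\!\big(1-\frac{y}{2x}\big)$, so that $\hat N_n^j$ is the nearest integer to $g(\overline r_n^{j,1,1},\overline r_n^{j,1,M+1})$, and set $x_\star=U_{n,1,N}$, $y_\star=2U_{n,1,N}(1-2^{-N})$, for which $g(x_\star,y_\star)=N$ and $1-\frac{y_\star}{2x_\star}=2^{-N}\in(0,1)$. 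Since $U_{n,1,N}>0$ almost surely and the log argument is bounded away from $0$ at $(x_\star,y_\star)$, the function $g$ is $C^1$ on a neighbourhood of that point, so there is a constant $\eta>0$, depending only on $N$ and on the realised utilities, with the property that whenever $|\overline r_n^{j,1,1}-x_\star|<\eta$ and $|\overline r_n^{j,1,M+1}-y_\star|<\eta$ one has $|\hat N_n^j-N|<\tfrac12$, i.e.\ agent $n$'s integer estimate equals $N$. A union bound over the $N$ agents then gives
\[
\mathbb{P}\big(\tilde E^{j,1}_N\big)\;\le\; N\Big(\mathbb{P}\big(|\overline r_n^{j,1,1}-x_\star|\ge\eta\big)+\mathbb{P}\big(|\overline r_n^{j,1,M+1}-y_\star|\ge\eta\big)\Big),
\]
so it suffices to bound each of these two probabilities by $O(\exp(-j^{1.4}))$.

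For the first, $\overline r_n^{j,1,1}$ is the empirical mean of $\tfrac12 j(j+1)$ i.i.d.\ samples of the form $U_{n,1,N}+\nu$ with $\nu$ sub-Gaussian of variance proxy $b$; Chernoff's bound for averages of i.i.d.\ sub-Gaussian variables (exactly as used in \eqref{eq:01}) gives $\mathbb{P}(|\overline r_n^{j,1,1}-x_\star|\ge\eta)\le 2\exp\!\big(-\tfrac{\eta^2}{4b}\,j(j+1)\big)$, which is $O(\exp(-j^{1.4}))$ since the exponent is of order $j^{2}$.

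The term involving $\overline r_n^{j,1,M+1}$ is the only delicate point, because this quantity is an average over the \emph{random} collection of time-steps on which agent $n$ actually chose resource $1$, and on each such step the load --- hence the conditional law of the observed reward --- is itself random. I would condition on $L$, the number of those time-steps: $L\sim\mathrm{Bin}\big(\tfrac12 j(j+1),\tfrac12\big)$, and given $L$ the $L$ recorded rewards are i.i.d.\ with mean exactly $y_\star$ (the identity $\sum_{k=0}^{N-1}\binom{N-1}{k}\frac{1}{k+1}=\frac{2^{N}-1}{N}$ is what produces the limit in \eqref{eq:estimate_u_n_M_plus_1}) and each equals a value in $[0,U_{\max}]$ plus independent sub-Gaussian noise, hence is sub-Gaussian with a fixed variance proxy $\tilde b$. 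On the event $\{L\ge\tfrac18 j(j+1)\}$ the sub-Gaussian average bound gives conditional failure probability at most $2\exp\!\big(-\tfrac{\eta^{2}}{4\tilde b}\cdot\tfrac18 j(j+1)\big)$, while $\mathbb{P}\big(L<\tfrac18 j(j+1)\big)\le\exp\!\big(-\tfrac{1}{32}j(j+1)\big)$ by a multiplicative Chernoff bound for the binomial (a deviation of half its mean). Summing these two contributions and absorbing the constants $N$, $b$, $\tilde b$ into the $O(\cdot)$ yields $\mathbb{P}(\tilde E^{j,1}_N)=O(\exp(-j^{1.4}))$. The conditioning argument for this ratio-type estimator is the main obstacle; everything else is a routine sub-Gaussian concentration estimate.
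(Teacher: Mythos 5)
Your proposal is correct and follows essentially the same route as the paper: reduce the event $\{\hat N_n^j\neq N\}$ to deviation events for the two block averages $\overline r_n^{j,1,1}$ and $\overline r_n^{j,1,M+1}$, handle the latter by conditioning on the binomial number of active slots (binomial lower tail plus conditional sub-Gaussian concentration with the bounded-load part absorbed into the variance proxy), and finish with a union bound over agents. The only difference is cosmetic: the paper's appendix algebra produces the explicit relative threshold $2^{-N-4}\,\expect{\overline r_n^{j,1,k}}$ where you invoke smoothness of $g$ to obtain an unspecified constant $\eta>0$, and your choice of half the binomial mean as the cutoff even yields an $e^{-cj^{2}}$ rate versus the paper's $e^{-cj^{1.5}}$, both of which are $O(\exp(-j^{1.4}))$.
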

\begin{proof}
An error event by agent $n$ in block $k$ of phase 1 of epoch $j$ is denoted with $E^{j,1,k}_n$ and defined by 
\begin{equation}
    \left|\overline{r}_n^{j,1,k} -
    \expect{\overline{r}_n^{j,1,k}}\right| 
    >
    2^{-N-4}
    \expect{\overline{r}_n^{j,1,k}} 
\end{equation}
We prove in the appendix that:
\begin{equation}
\label{eq:34}
\mathbb{P}\left(\hat{N}_n^j \neq N    \right) \leq \mathbb{P}\left(E^{j,1,1}_n\right)+\mathbb{P}\left(E^{j,1,M+1}_n\right)
\end{equation}

According to Chernoff's inequality:
\begin{equation}
\label{eq:04}
    \mathbb{P}\left(E^{j,1,1}_n\right) \leq 2\exp \left(-\frac{j^2}{4b} \left(U_{n,1,N} \cdot 2^{-N-4} \right)^2 \right)    
\end{equation}
According to Lemma \ref{lem:01} the probability $\mathbb{P}\left(E^{j,1,M+1}_n\right)$ is $O(\exp(-j^{1.4}))$.
A union bound on the agents preserves the asymptotic behavior in $j$ such that $\mathbb{P}\left(\tilde{E}^{j,1}_N
\right)$ is upper bounded by the same order.
\end{proof}
\begin{lemma}
\label{lem:01}
$\mathbb{P}(E_n^{j,1,M+1})= O(\exp(-j^{1.4}))$
\end{lemma}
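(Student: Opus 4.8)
The plan is to view $\overline{r}_n^{j,1,M+1}$ as a ratio $S/K$, where $K$ is the total number of time‑steps of block $M+1$, accumulated over epochs $1,\dots,j$, in which agent $n$ actually accessed resource $1$ (the denominator in \eqref{eq:estimate_u_n_M_plus_1}), and $S$ is the sum of the corresponding reward samples (the numerator). Since in every time‑step of this block each agent flips an independent fair coin, $K\sim\mathrm{Bin}(m,\tfrac12)$ with $m=\sum_{i=1}^{j} i=\tfrac12 j(j+1)\ge \tfrac12 j^2$. Conditioned on which time‑steps were selected, the $K$ reward samples are i.i.d.: in each such slot the load on resource $1$ is $1+\mathrm{Bin}(N-1,\tfrac12)$, independent across slots because the other agents also act independently in time, and the additive noise is i.i.d.\ sub‑Gaussian. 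Hence each sample has common mean
$\mu \triangleq U_{n,1,1}\,\mathbb{E}\!\left[(1+\mathrm{Bin}(N-1,\tfrac12))^{-1}\right]=2\,U_{n,1,N}\,(1-2^{-N})$,
the value in \eqref{eq:estimate_u_n_M_plus_1}, and, being the sum of a random variable supported on $[0,U_{\max}]$ and an independent sub‑Gaussian of variance proxy $b$, is itself sub‑Gaussian with variance proxy $b'\triangleq b+U_{\max}^2/4$ (both facts are among the basic sub‑Gaussian properties recalled in the appendix). In particular $\mathbb{E}[\overline{r}_n^{j,1,M+1}\mid K=\kappa]=\mu$ for every $\kappa\ge 1$, and since $\mathbb{P}(K=0)=2^{-m}$ is negligible we may take the centering value in the definition of $E_n^{j,1,M+1}$ to be $\mu$.

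Next I would split the error event according to whether the denominator is atypically small:
\[
E_n^{j,1,M+1}\;\subseteq\;\{K<m/4\}\;\cup\;\bigl(\{K\ge m/4\}\cap E_n^{j,1,M+1}\bigr).
\]
The first event is controlled by Hoeffding's inequality for the binomial:
$\mathbb{P}(K<m/4)=\mathbb{P}\!\left(K-\tfrac m2<-\tfrac m4\right)\le \exp(-m/8)\le \exp(-j^2/16)$.
(This also absorbs the degenerate case $K=0$ for all $j\ge 1$.)

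For the second event I condition on $K=\kappa$ with $\kappa\ge m/4$. Then $\overline{r}_n^{j,1,M+1}=S/\kappa$ is an average of $\kappa$ i.i.d.\ sub‑Gaussian$(b')$ variables with mean $\mu$, so Chernoff's inequality for the average of i.i.d.\ sub‑Gaussian random variables (the same bound used in \eqref{eq:01} and \eqref{eq:04}) gives, writing $\eta\triangleq 2^{-N-4}\mu$,
\[
\mathbb{P}\bigl(\,|S/\kappa-\mu|>\eta \,\bigm|\, K=\kappa\,\bigr)\;\le\;2\exp\!\left(-\frac{\kappa\,\eta^2}{2b'}\right)\;\le\;2\exp\!\left(-\frac{m\,\eta^2}{8b'}\right)\;\le\;2\exp\!\left(-\frac{j^2\eta^2}{16b'}\right).
\]
Averaging over $\kappa\ge m/4$ and adding the two contributions yields $\mathbb{P}(E_n^{j,1,M+1})\le \exp(-j^2/16)+2\exp(-c\,j^2)$ with $c=\eta^2/(16b')>0$ depending only on $N,b,U_{\max}$ and $U_{n,1,1}$; this is $O(\exp(-j^{1.4}))$ — in fact $O(\exp(-\Theta(j^2)))$, far stronger than claimed.

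The only genuine obstacle is the random denominator: $\overline{r}_n^{j,1,M+1}$ is a sum of a random number of terms divided by that same number, so a sub‑Gaussian tail bound cannot be applied directly. The two‑stage argument above resolves this — first show $K$ concentrates near $m/2$ (exponentially in $m$, hence in $j^2$), then condition on $K$ and invoke ordinary sub‑Gaussian concentration — and the large slack between the $\Theta(j^2)$ that comes out and the $j^{1.4}$ that is asked for means no constants need to be tracked carefully.
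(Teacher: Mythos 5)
Your proposal is correct and follows essentially the same route as the paper's proof: split on whether the number of collected samples in block $M+1$ (a $\mathrm{Bin}\bigl(\tfrac12 j(j+1),\tfrac12\bigr)$ variable) is atypically small, bound that event with Hoeffding, and then apply a sub-Gaussian Chernoff bound conditionally, treating each sample as bounded-plus-sub-Gaussian noise. The only differences are cosmetic — you threshold at $m/4$ rather than the paper's $j^{1.5}/2$ (so you get $\exp(-\Theta(j^2))$ instead of $\exp(-\Theta(j^{1.5}))$, both $O(\exp(-j^{1.4}))$) and you use $U_{\max}^2/4$ in place of $(U_{n,1,1}-U_{n,1,N})^2$ for the bounded part's variance proxy — and your explicit handling of the degenerate $K=0$ case is if anything slightly more careful than the paper's.
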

\begin{proof}
Let the number of samples collected by agent $n$ during all epochs until epoch $j$ in phase 1 and block $M+1$ be:
\begin{equation}
     \xi^{j,1,M+1}_n \triangleq \sum_{i=1}^j \sum_{\tau=1}^i \mathbb{1}\left(m_n^{i,1,M+1,\tau} =1 \right)
\end{equation}
The probability $\mathbb{P}\left(E^{j,1,M+1}_n \right)$ is upper bounded by the sum of the following two probabilities:
\begin{align}
\mathbb{P}\left( \xi^{j,1,M+1}_n < \frac{j^{1.5}}{2} \right)& \label{eq:36}
\\
\mathbb{P}\left(E^{j,1,M+1}_n
\;\bigg|\; 
\xi^{j,1,M+1}_n > \frac{j^{1.5}}{2}
\right)&
\label{eq:37}
\end{align}
Since $\xi^{j,1,M+1}_n$ is a binomial random variable with parameters $\frac{1}{2}(j^2+j)$ and $1/2$, the probability in \eqref{eq:36} can be upper bounded with Hoeffding's inequality by the following expression:
\begin{equation}
\label{eq:102}
\begin{split}
    &\exp \left(
    -2\cdot \frac{1}{2}(j^2+j)
    \left( 
    \frac{1}{2} - \frac{j^{1.5}}{2\cdot \frac{1}{2}(j^2+j)}
    \right)^2
    \right)
    \\ \overset{(a)}{<}
    &\exp \left(
    -j^2
    \left( 
    \frac{1}{2} - \frac{1}{j^{0.5}}
    \right)^2
    \right)
    \overset{(b)}{<}
    \exp \left( \frac{-j^2}{100}
    \right)
\end{split}
\end{equation}
Where (a) holds for any epoch and (b) holds from the seventh epoch. This is of course $O(\exp(-j^{1.4}))$.

Since $\overline{r}_n^{j,1,M+1}$ is the sum of a sub-Gaussian random variable with variance proxy $b$ and another random variable bounded between $U_{n,1,1}$ and $U_{n,1,N}$ then 
the probability in \eqref{eq:37} can be upper bounded with Chernoff-Hoeffding's inequality by the following expression:
\begin{equation}
\begin{split}
    & 2\exp \left(-\frac{1}{4} \cdot \frac{\left(\expect{\overline{r}_n^{j,1,M+1}} 
    \cdot
    2^{-N-4}\right)^2 j^{1.5}}{b+(U_{n,1,1}-U_{n,1,N})^2}\right)
\end{split}
\label{eq:03}
\end{equation}
The last expression is also $O(\exp(-j^{1.4}))$.

\end{proof}

\begin{lemma}
\label{lem:neg_fail}
%The probability of an Negotiation Phase Failure of  \eqref{eq:Negotiation_Failure} is 
$\mathbb{P}(E^{j,2}) = O\left(\exp\left(-j^{1+\delta/4}\right)\right)$.
\end{lemma}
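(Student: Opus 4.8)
The plan is to run the regular-perturbed-Markov-chain machinery of \cite{marden2014achieving} on the mood-and-action dynamics of the Negotiation Phase, but driven by the agents' \emph{estimated} utilities $\hat U^j$ rather than the true ones, and then to turn the resulting (limiting) stochastic-stability statement into a failure probability that decays in the epoch index $j$. First I would condition on the good-estimate event $G_0=\{\max_{n,m,\ell}|\hat U^j_{n,m,\ell}-U_{n,m,\ell}|<\rho'\}$, where $\rho'>0$ is a constant small enough to separate the utilities of consecutive loads; by the Chernoff argument behind Lemma~\ref{lem:est_fail}, $\mathbb{P}(G_0^c)=O(\exp(-j^{1.4}))$. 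On $G_0$ the separation $\min_{\ell'\neq\ell}\bigl|\hat U^j_{n,m,\ell}-\hat U^j_{n,m,\ell'}\bigr|$ is bounded below by a positive constant, so Chernoff's inequality for the average of the $j^{1+\delta/3}$ i.i.d.\ sub-Gaussian reward samples collected in a block shows that, in any fixed block, the load estimate \eqref{eq:estimate_load} of any fixed agent is wrong with probability $O(\exp(-c\,j^{1+\delta/3}))$ for some constant $c>0$; a union bound over the $N$ agents and the $j^{1+\delta/3}$ blocks keeps the probability of the event $G_1^c$ that \emph{some} block misestimates \emph{some} load at $O(\mathrm{poly}(j)\exp(-c\,j^{1+\delta/3}))$.

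On $G_0\cap G_1$ the sequence of profiles $(S_n^{j,2,k},m_n^{j,2,k})_n$ evolves exactly as the $\varepsilon$-perturbed Markov process of \cite{marden2014achieving} with payoffs $\hat U^j$: the recurrent classes of the unperturbed ($\varepsilon=0$) chain are the all-Content, fixed-action profiles, and the resistance-tree computation of \cite{marden2014achieving} singles out the Content profile at the estimated-optimal allocation $\mathbf m^{*j}$ of \eqref{eq:est_opt_alloc} (which is a.s.\ the unique maximizer of $\sum_n\hat U^j_{n,m_n,\ell_n(\mathbf m)}$) as the unique stochastically stable state. Hence there is a threshold $\varepsilon_0>0$, independent of $j$, such that for every fixed $\varepsilon<\varepsilon_0$ the stationary distribution $\mu_\varepsilon$ of the $\varepsilon$-perturbed chain assigns the Content-$\mathbf m^{*j}$ state mass strictly larger than $1/2$; in particular the $\mu_\varepsilon$-marginal probability that agent $n$ plays $m_n^{*j}$ strictly exceeds the $\mu_\varepsilon$-probability of any other single resource.

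It remains to pass from $\mu_\varepsilon$ to the $\alpha\,j^{1+\delta/3}$ \emph{tail} blocks that actually enter \eqref{eq:exp_resource}. For a fixed $\varepsilon<\varepsilon_0$ the perturbed chain is irreducible and aperiodic on a finite state space, hence has a spectral gap bounded away from $0$ and a mixing time $t_{\mathrm{mix}}=O(1)$ in $j$; since the Negotiation Phase of epoch $j$ has $j^{1+\delta/3}$ blocks, by the time the tail begins the chain has already run for $(1-\alpha)j^{1+\delta/3}\gg t_{\mathrm{mix}}$ blocks, so its law there is within, say, $\tfrac1{100}$ of $\mu_\varepsilon$ in total variation. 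A Chernoff-type concentration inequality for Markov chains (e.g.\ via the spectral gap) then gives that the empirical frequency of the Content-$\mathbf m^{*j}$ state over the $\alpha\,j^{1+\delta/3}$ tail blocks exceeds $1/2$ except with probability $O(\exp(-c'\,\alpha\,j^{1+\delta/3}))$; on that event each agent's most-visited tail resource is $m_n^{*j}$, i.e.\ $\mathbf m^{j,3}=\mathbf m^{*j}$ and $E^{j,2}$ does not occur. Collecting the three contributions $\mathbb{P}(G_0^c)=O(\exp(-j^{1.4}))$, $\mathbb{P}(G_1^c)=O(\mathrm{poly}(j)\exp(-c\,j^{1+\delta/3}))$, and the tail-average failure $O(\exp(-c'\,\alpha\,j^{1+\delta/3}))$, and noting that each is $O(\exp(-j^{1+\delta/4}))$ because $1.4$ and (for $\delta>0$) $1+\delta/3$ strictly exceed $1+\delta/4$, so the stretched-exponential rates dominate the polynomial prefactors for large $j$, yields $\mathbb{P}(E^{j,2})=O(\exp(-j^{1+\delta/4}))$.

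I expect the third paragraph to be the main obstacle: carrying the $\varepsilon\to 0$ stochastic-stability conclusion of \cite{marden2014achieving} over to a \emph{fixed} $\varepsilon$ with an explicitly $j$-decaying failure probability requires assembling (i) $\mu_\varepsilon(\text{Content-}\mathbf m^{*j})>1/2$ for small fixed $\varepsilon$, (ii) the $O(1)$ mixing time of the fixed-$\varepsilon$ chain, and (iii) a Markov-chain concentration bound whose rate is linear in the number $\alpha\,j^{1+\delta/3}$ of tail blocks --- all while keeping the argument cleanly restricted to $G_0\cap G_1$ so that occasional load misestimations cannot corrupt the mood dynamics. (The footnote's remark that $j^{\delta/3}$ blocks would already suffice with the tools of \cite{bistritz2020GOT} is precisely a sharpening of step (iii).)
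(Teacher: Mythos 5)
Your proposal follows essentially the same route as the paper: split $E^{j,2}$ into a load-estimation error (Chernoff plus a union bound over agents and the $j^{1+\delta/3}$ blocks, using a constant separation between consecutive-load utility estimates) and a mixing/concentration failure of the $\varepsilon$-perturbed Marden-type mood dynamics, whose stationary mass on the all-Content estimated-optimal state exceeds a constant above $1/2$, finished with a Markov-chain Chernoff bound over the $\alpha j^{1+\delta/3}$ tail blocks. Your explicit conditioning on the good-estimation event $G_0$ and the remark that the fixed-$\varepsilon$ chain has $O(1)$ mixing time are only bookkeeping refinements of steps the paper handles via its auxiliary chain $\mathcal{\tilde M}^j$, the constant $\Phi$, and the bound of \cite{ChungMine} with $T_{1/8}$ and $\left\Vert\varphi\right\Vert_\pi$ treated as constants.
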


\begin{proof}
A Load Estimation Error occurs when at least one agent incorrectly estimated its load during at least one block of the Negotiation Phase:
\begin{equation}
    E^{j,2}_{\text{load}}:\exists n, k :
    \hat{\ell}^{j,2,k}_{n} \neq \ell^{j,2,k}_{n}
\label{eq:load_est_err}
\end{equation}
The probability of this error is $O\left(\exp\left(-j^{1+\delta/4}\right)\right)$ according to lemma \ref{lem:load_est_err_bnd}.

An Insufficient Mixing Time Error is defined as follows:
\begin{equation}
    E^{j,2}_{\text{mix}}= E^{j,2}
    \land    \neg E^{j,2}_{\text{load}}
\end{equation}
The probability of this error is $O\left(\exp\left(-j^{1+\delta/4}\right)\right)$ according to lemma \ref{lem:insuff_mix_time}. 

To finish this lemma:
\begin{equation}
\begin{split}
    \mathbb{P}\left(E^{j,2}\right) 
    \leq 
    \mathbb{P}\left(E^{j,2}_{\text{load}}
    \right)
    +
    \mathbb{P}\left(E^{j,2} _{\text{mix}}
    \right).
\end{split}
\end{equation}
\end{proof}
\begin{lemma}
\label{lem:load_est_err_bnd}
$\mathbb{P}\left(E^{j,2}_{\text{load}}
    \right)
    = 
    O\left(\exp\left(-j^{1+\delta/4}\right)\right)$
\end{lemma}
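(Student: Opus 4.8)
The idea is to split a load-estimation error into two possible causes---an inaccurate utility table inherited from the Estimation Phase, and an atypical noise average inside some block of the Negotiation Phase---and to show that each occurs with probability doubly exponentially small in $j$. Set $\Delta \triangleq \min_{n,m}\min_{1\le\ell<\ell'\le N}\bigl|U_{n,m,\ell}-U_{n,m,\ell'}\bigr| = \min_{n,m}\frac{U_{n,m,1}}{N(N-1)}$, a fixed strictly positive constant with probability $1$ since the $U_{n,m,1}$ are continuous. First I would introduce the refined estimation-accuracy event
\[
\tilde E^{j,1}_{\text{fine}}:\quad \bigl(\exists n:\ \hat N^j_n\neq N\bigr)\ \ \text{or}\ \ \max_{n,m}\bigl|\hat U^j_{n,m,N}-U_{n,m,N}\bigr|\ \ge\ \frac{\Delta}{8N}.
\]
On its complement, \eqref{eq:estimate_u_n_m_ell} gives $\hat U^j_{n,m,\ell}=\tfrac{N}{\ell}\hat U^j_{n,m,N}$, hence $\bigl|\hat U^j_{n,m,\ell}-U_{n,m,\ell}\bigr|\le N\bigl|\hat U^j_{n,m,N}-U_{n,m,N}\bigr|<\Delta/8$ for all $n,m,\ell$. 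Now Lemma~\ref{lem:incorrect_estimation_of_N} already bounds $\mathbb{P}(\exists n:\hat N^j_n\neq N)$ by $O(\exp(-j^{1.4}))$, while the estimate \eqref{eq:01} in the proof of Lemma~\ref{lem:est_fail} bounds $\mathbb{P}\bigl(\max_{n,m}|\hat U^j_{n,m,N}-U_{n,m,N}|\ge\theta\bigr)$ by $O(\exp(-j^{1.5}))$ for any fixed $\theta>0$ (the threshold $\rho/N$ there plays no role beyond being a positive constant); taking $\theta=\Delta/(8N)$ gives $\mathbb{P}\bigl(\tilde E^{j,1}_{\text{fine}}\bigr)=O(\exp(-j^{1.4}))$.

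Next I would bound the probability of a load error on the complement $\neg\tilde E^{j,1}_{\text{fine}}$. In block $k$ of the Negotiation Phase every agent fixes its resource at the start, so the true load $\ell^{j,2,k}_n$ is constant throughout the block and the $j^{1+\delta/3}$ samples averaged into $\overline r^{j,2,k}_n$ are i.i.d.\ sub-Gaussian with mean $U_{n,m^{j,2,k}_n,\ell^{j,2,k}_n}$ and variance proxy $b$. If in addition $\bigl|\overline r^{j,2,k}_n-U_{n,m^{j,2,k}_n,\ell^{j,2,k}_n}\bigr|<\Delta/8$, then $\overline r^{j,2,k}_n$ lies within $\Delta/4$ of $\hat U^j_{n,m^{j,2,k}_n,\ell^{j,2,k}_n}$ but at distance at least $\Delta-\Delta/8-\Delta/8=3\Delta/4$ from $\hat U^j_{n,m^{j,2,k}_n,\ell}$ for every $\ell\neq\ell^{j,2,k}_n$, so the minimizer in \eqref{eq:estimate_load} is the correct load. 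Hence a load error in block $k$ forces $\bigl|\overline r^{j,2,k}_n-U_{n,m^{j,2,k}_n,\ell^{j,2,k}_n}\bigr|\ge\Delta/8$ for some agent $n$; conditioning on that block's (arbitrary) resource configuration and applying Chernoff to the average of $j^{1+\delta/3}$ sub-Gaussians bounds this by $2\exp\bigl(-\tfrac{j^{1+\delta/3}}{2b}(\Delta/8)^2\bigr)$, uniformly in the configuration.

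Finally, a union bound over the $N$ agents and the $j^{1+\delta/3}$ blocks gives
\[
\mathbb{P}\bigl(E^{j,2}_{\text{load}}\bigr)\ \le\ \mathbb{P}\bigl(\tilde E^{j,1}_{\text{fine}}\bigr)\ +\ 2Nj^{1+\delta/3}\exp\!\Bigl(-\tfrac{j^{1+\delta/3}}{2b}\bigl(\tfrac{\Delta}{8}\bigr)^{2}\Bigr),
\]
and absorbing the polynomial prefactor into the exponent makes the second term $O(\exp(-cj^{1+\delta/3}))$ for some $c>0$. Since $0<\delta\le 1$ we have $1+\delta/3>1+\delta/4$ and $1.4>1+\delta/4$, so both terms are $O(\exp(-j^{1+\delta/4}))$, which is the claim. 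The one point needing care is the first step: one must notice that the Estimation-Phase concentration bounds hold for an arbitrary fixed accuracy level---not just for $\rho$---so that the utility table can be forced accurate to within the horizon-independent load gap $\Delta/8$; everything else is a routine union bound plus a sub-Gaussian Chernoff estimate, and the choice of $j^{1+\delta/3}$ samples per block is exactly what makes the resulting rate beat $\exp(-j^{1+\delta/4})$.
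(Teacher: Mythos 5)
Your proof is correct, and it reaches the same final bound through the same skeleton (Chernoff on the block average plus a union bound over the $N$ agents and the $j^{1+\delta/3}$ load-estimation blocks), but the way you handle the Estimation-Phase inaccuracy differs from the paper. The paper works with the deviation $\overline r^{j,2,k}_n-\hat U^j_{n,m,\ell}$ directly: it folds the Estimation-Phase noise into the variance proxy of a single sub-Gaussian variable ($\tfrac{b}{j^{1+\delta/3}}+\tfrac{2b}{j^2+j}<\tfrac{2b}{j^{1+\delta/3}}$) and compares it to a threshold $\Phi$ defined as one third of the minimal \emph{estimated} gap $\min_{n,m}|\hat U^j_{n,m,N-1}-\hat U^j_{n,m,N}|$, which is itself a random quantity; this gives a very compact argument but leaves implicit both the randomness of the threshold and the fact that the estimated-utility error is zero-mean sub-Gaussian only on the event $\hat N^j_n=N$. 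You instead condition on a separate high-probability event (the utility table accurate to within a fixed fraction of the true gap $\Delta=\min_{n,m}U_{n,m,1}/(N(N-1))$, plus correct $\hat N$), re-using Lemma \ref{lem:incorrect_estimation_of_N} and the concentration argument of \eqref{eq:01} with the threshold $\Delta/(8N)$ in place of $\rho/N$, and then measure the block-average deviation against the \emph{true} utilities with deterministic margins ($\Delta/4$ versus $3\Delta/4$). This costs an extra event and some constant bookkeeping, but it buys a cleaner argument: the comparison threshold is deterministic, the conditioning on correct $\hat N$ is explicit, and the observation that the Estimation-Phase bounds hold for an arbitrary fixed accuracy level (not just $\rho$) is exactly the right one. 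Both routes end with $2Nj^{1+\delta/3}\exp(-c\,j^{1+\delta/3})=O(\exp(-j^{1+\delta/4}))$, using $1+\delta/3>1+\delta/4$ and $1.4>1+\delta/4$, so the result and asymptotics match the paper's.
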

\begin{proof}
Agent $n$ will incorrectly estimate its load in block $k$ if the average reward it receives in that block with resource $m_n^{j,2,k}=m$ and load $\ell_n^{j,2,k}=\ell$ is sufficiently far from its estimated utility:
\begin{equation}
\label{eq: 35}
    \left| r_{n,m,\ell}^{j,2,k} - \hat{U}_{n,m,\ell}^{j} \right| > \Phi
\end{equation}
Where $\Phi$ is defined as 
\begin{equation}
    \Phi \triangleq \frac{1}{3} \underset{n,m}{\min} \left| \hat{U}^j_{n,m,N-1}  - \hat{U}^j_{n,m,N} \right|
\end{equation}

The difference between the average reward and estimated utility in the left hand side of \eqref{eq: 35} is a sub-Gaussian random variable with variance proxy $\frac{b}{j^{1+\delta/3}}+\frac{2b}{j^2+j}
=
b\frac{j+2j^{\delta/3}+1}{j^{2+\delta/3}+j^{\delta/3+1}}
<
\frac{2b}{j^{1+\delta/3}}$.
A union bound on  \eqref{eq: 35} with respect to the agents and load-estimation-blocks together with Chernoff's bound produces:
\begin{equation}
\begin{split}
    \mathbb{P}\left(E^{j,2}_{\text{load}}
    \right) \leq 2N  j^{1+\delta/3}   \exp \left( -  \frac{\Phi^2 j^{1+\delta/3}}{2b}  \right)
    \\
    = 
    O\left(\exp\left(-j^{1+\delta/4}\right)\right)
\end{split}
\end{equation}
\end{proof}
\begin{lemma}
$\mathbb{P}\left(E^{j,2}_{\text{mix}}\right)  = 
O\left(\exp\left(-j^{1+\delta/4}\right)\right)$
\end{lemma}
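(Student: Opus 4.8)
The plan is to show that, conditioned on correct load estimation throughout the Negotiation Phase, the Negotiation Phase behaves like the perturbed Markov chain analyzed in \cite{marden2014achieving}, whose unique stochastically stable states are the welfare-maximizing all-Content-all-stable states; and then to upgrade the resulting ``eventual concentration'' on these states to a quantitative bound on the probability that the frequency-count rule \eqref{eq:exp_resource} picks the wrong allocation.

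First I would set up the state space for epoch $j$: the state of the chain at block $k$ is the pair $(\mathbf m^{j,2,k}, \mathbf S^{j,2,k})$ of the joint resource profile and the joint mood vector. Under the complement of $E^{j,2}_{\text{load}}$ every agent's estimated utility $\hat U_n^{j,2,k}$ equals the true $U_n(\mathbf m^{j,2,k})$ (up to the fixed estimation error that, on the complement of $E^{j,1}$, is immaterial because it preserves the ordering of welfares by the gap $\rho$), so the transition kernel defined by \eqref{eq:content_action_choice}--\eqref{eq:new_mood} is exactly an $\varepsilon$-perturbation of the unperturbed chain of \cite{marden2014achieving}, with resistances governed by $U_{\max}-U_n(\mathbf m)$ and by the exploration exponent $c>N$. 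I would then quote the structural result from that paper: for $\varepsilon$ small enough the stationary distribution $\pi_\varepsilon$ concentrates on the recurrence class $\{\mathbf m^*\}\times\{C\}^N$ (here using that the optimal allocation is unique, Section \ref{sec:Cooperative Repeated Congestion games}), with $\pi_\varepsilon$ of every other state being $O(\varepsilon)$; this is where the choice of $\varepsilon$ in Theorem \ref{thm:TotRegBnd} is pinned down.

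Next comes the quantitative part, which is the main obstacle. ``Insufficient mixing'' means that even though the chain is eventually concentrated, within the $j^{1+\delta/3}$ blocks of the tail (the last $\alpha$-fraction, cf. \eqref{eq:exp_resource}) the empirical occupation measure has not yet settled near $\pi_\varepsilon$, or has settled near it but a large deviation in the frequency count still flips the $\arg\max$. I would control this in two steps. (i) Mixing: the perturbed chain is irreducible and aperiodic on a finite state space, so it has a spectral gap / Doeblin coefficient that is a fixed positive constant (depending on $\varepsilon, c, N, M$ but not on $j$); after a constant number of burn-in blocks the distribution of the block state is within, say, $\varepsilon$ in total variation of $\pi_\varepsilon$, hence puts mass $\geq 1-O(\varepsilon)$ on the optimal Content state. (ii) Concentration of the occupation measure: apply a Markov-chain concentration inequality (e.g. a Chernoff bound for Markov chains, via the spectral gap, or a regeneration/block argument using the Doeblin minorization) to the indicator that the chain is in the optimal Content state during the $\Theta(j^{1+\delta/3})$ tail blocks. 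This gives that the fraction of tail blocks spent at $\mathbf m^*$ exceeds $1/2$ except with probability $O(\exp(-c' j^{1+\delta/3}))$ for some constant $c'>0$; on that event every agent's most-frequent resource in \eqref{eq:exp_resource} is its component of $\mathbf m^*$, so $\mathbf m^{j,3}=\mathbf m^{*j}=\mathbf m^*$ and $E^{j,2}_{\text{mix}}$ does not occur. Since $j^{1+\delta/3}\gg j^{1+\delta/4}$ for large $j$, this is $O(\exp(-j^{1+\delta/4}))$ as claimed.

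The delicate points I would flag explicitly: the concentration inequality must be applied to a time-inhomogeneous-looking chain (moods can ``lock'' via the deterministic $C\to C$ rule), so I would phrase it on the induced chain where the $C^N$-at-$\mathbf m^*$ event is treated as a near-absorbing set with constant escape probability, and bound below the number of tail blocks spent there by a Bernoulli-type count after the first entry; bounding the first-entry time by a geometric with constant parameter (again from irreducibility of the finite perturbed chain) absorbs the burn-in. One also needs that a wrong count cannot be produced by a single long excursion — handled because excursions out of the optimal state have exponentially small length by the constant spectral gap. Finally I would note that all constants are uniform in $j$, which is what makes the per-epoch bound summable in the proof of Lemma \ref{lem:neg_fail} and hence in Theorem \ref{thm:TotRegBnd}.
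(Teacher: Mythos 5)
Your proposal follows essentially the same route as the paper: condition on the absence of load-estimation errors so the Negotiation Phase is exactly the $\varepsilon$-perturbed chain of Marden et al., invoke its stochastic stability (interdependence plus Theorem 3.2) to get the stationary mass of the optimal all-Content state above a fixed threshold, and then apply a Markov-chain Chernoff--Hoeffding concentration bound to the occupation frequency of the optimal state over the $\Theta(j^{1+\delta/3})$ tail blocks, which dominates $j^{1+\delta/4}$. The only difference is cosmetic: the paper uses the Chung--Lam--Liu--Mitzenmacher bound with the mixing time $T_{1/8}$ and the $\pi$-norm of the post-burn-in distribution, while you invoke an equivalent spectral-gap/regeneration argument with the same (implicitly assumed) uniformity of constants in $j$.
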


\begin{proof}
Let $\mathcal{M}^j$ be the Markov chain of the Negotiation Phase of epoch $j$. 
Define the optimal state to be the estimated optimal allocation \eqref{eq:est_opt_alloc} with the optimal estimated utilities \eqref{eq:opt_est_util} and all agents Content:
\begin{equation}
\label{eq: optimal state definition}
    z^* \triangleq (\mathbf{m}^{*j}, \mathbf{U}^{*j}, C^N).
\end{equation}

Let $\mathcal{\tilde{M}}^j$ be a Markov Chain on $\mathcal{Z}$ identical to $\mathcal{M}^j$ except that a Load Estimation Error is impossible. 
The stationary probability in $\mathcal{\tilde{M}}^j$ of the optimal state is $\pi^*$. According to lemma \ref{lem:sufficient_stationary_distribution}: $\pi^*>2/3$. 
The state of the Markov Chain in block $k$ of the Negotiation Phase of epoch $j$ is $z^{j,2,k}$.
The estimated stationary probability of  the optimal state is:
\begin{equation}
    \hat{\pi}^*  \triangleq \frac{1}{\alpha
    j^{1+\delta/3}}\sum_{k = (1-\alpha) j^{1+\delta/3}} ^{j^{1+\delta/3}} \mathbb{1}(z^{j,2,k} = z^*)
\end{equation}

We wish to bound the probability that the optimal state was visited in less than half the blocks of the tail of the Negotiation Phase due to insufficient mixing time of the Markov chain. According to lemma \ref{lem:insuff_mix_time} this is: 
\begin{equation}
\label{eq: 52}
\mathbb{P}\left( \hat{\pi}^* \leq \frac{1}{2} \leq \pi^*\frac{3}{4} \right) 
= O\left(\exp\left(-j^{1+\delta/4}\right)\right)
\end{equation}
\end{proof}
\begin{lemma}
\label{lem:sufficient_stationary_distribution}
$\pi^* >2/3$
\end{lemma}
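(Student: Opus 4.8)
The plan is to recognize $\mathcal{\tilde{M}}^j$ as a regularly perturbed Markov chain and to show, via the resistance-tree characterization of stochastically stable states used in \cite{marden2014achieving}, that $z^*$ is its \emph{unique} stochastically stable state; this forces $\pi^*\to 1$ as $\varepsilon\to 0^+$, so $\pi^*>2/3$ for $\varepsilon$ small enough. The first step is pure bookkeeping. On the event that defines $\mathcal{\tilde{M}}^j$ --- no load-estimation error --- the load estimate \eqref{eq:estimate_load} is exact, so by \eqref{eq:estimate_utility_based_on_load} the quantity $\hat{U}_n^{j,2,k}$ that drives the mood update \eqref{eq:new_mood} equals the deterministic number $\hat{U}^j_{n,m_n^{j,2,k},\ell_n(\mathbf{m}^{j,2,k})}$. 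Hence $\mathcal{\tilde{M}}^j$ is exactly the completely-uncoupled negotiation dynamic of \cite{marden2014achieving}, now played on the congestion game whose agent utilities are the epoch-$j$ estimates $\hat{U}^j_{n,\cdot,\cdot}$ and whose welfare is $\hat{W}^j(\mathbf{m})=\sum_n \hat{U}^j_{n,m_n,\ell_n(\mathbf{m})}$; note that $\mathbf{m}^{*j}$ from \eqref{eq:est_opt_alloc} is the maximizer of $\hat{W}^j$, and it is unique almost surely since the $U_{n,m,1}$ are continuous. For $\varepsilon\in(0,1)$ the chain is irreducible and aperiodic, so it has a unique stationary distribution $\pi_\varepsilon$.

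Next I would carry out the resistance-tree computation. At $\varepsilon=0$ the recurrent classes are the singleton ``aligned-content'' states $z_{\mathbf{m}}=(\mathbf{m},\hat{\mathbf{U}}^j(\mathbf{m}),C^N)$, one for each allocation $\mathbf{m}$, together with the set $D$ of all-discontent states; everything else is transient and drains into $D$, because a discontent agent re-randomizes its resource every block and thereby eventually changes the observed load of every content agent, turning it discontent at zero resistance. The two resistances that matter are: (i) escaping any $z_{\mathbf{m}}$ costs exactly $c$ --- one experiment by a single content agent per \eqref{eq:discontent_action_choice}, after which the displaced loads trigger a zero-resistance cascade into $D$; and (ii) entering $z_{\mathbf{m}}$ from $D$ costs $\sum_n\bigl(U_{\max}-\hat{U}^j_{n,m_n,\ell_n(\mathbf{m})}\bigr)=NU_{\max}-\hat{W}^j(\mathbf{m})$, by \eqref{eq:new_mood}. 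The spanning tree rooted at $z^*=z_{\mathbf{m}^{*j}}$ that sends every other $z_{\mathbf{m}}$ to $D$ and sends $D$ to $z^*$ has stochastic potential $(M^N-1)c+NU_{\max}-\hat{W}^{*j}$, and any tree rooted at another vertex is strictly heavier: re-routing the $D$-edge to a suboptimal content state adds $\hat{W}^{*j}-\hat{W}^{**j}>0$, while rooting at $D$ forces an extra out-edge of a content state, adding at least $c>NU_{\max}\ge NU_{\max}-\hat{W}^{*j}$ (this uses the algorithm's requirement $c>N$ under the normalization $U_{\max}=1$). Thus $z^*$ uniquely minimizes the stochastic potential, i.e., it is the unique stochastically stable state of $\mathcal{\tilde{M}}^j$ --- which is just the welfare-maximization conclusion of \cite{marden2014achieving} read off for the estimated game.

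Finally, $\pi_\varepsilon$ is continuous in $\varepsilon$ and concentrates on the stochastically stable set as $\varepsilon\to 0^+$, so $\pi^*=\pi_\varepsilon(z^*)\to 1$; the concentration rate is governed by the resistance gap $\hat{W}^{*j}-\hat{W}^{**j}$, which on the complement of the estimation-error event $\tilde{E}^{j,1}$ is bounded below in terms of the sub-optimality gap $\rho$, so the required threshold $\varepsilon_0$ can be chosen independently of $j$. Any $\varepsilon<\varepsilon_0$ then gives $\pi^*>2/3$, and this is the $\varepsilon$ used throughout Theorem \ref{thm:TotRegBnd}. The main obstacle is the second step: correctly identifying the unperturbed recurrent classes and, in particular, verifying that escaping a content state costs exactly $c$ (that a single experiment suffices to reach $D$ and the ensuing cascade is resistance-free) and that $c$ is large enough for the rooted-tree comparison to go through; once the hypotheses of the resistance-tree theorem of \cite{marden2014achieving} are checked for $\mathcal{\tilde{M}}^j$, continuity and concentration of $\pi_\varepsilon$ and the numerical $2/3$ bound are soft.
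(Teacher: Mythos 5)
Your argument is essentially the paper's: the paper verifies interdependence of the negotiation dynamics and invokes Theorem 3.2 of \cite{marden2014achieving} to conclude that the (estimated) welfare-maximizing all-content state $z^*$ is stochastically stable, hence $\pi^*\rightarrow 1$ as $\varepsilon\rightarrow 0$ and $\pi^*>2/3$ for a suitably small $\varepsilon$. You reach the same conclusion by re-deriving the resistance-tree computation underlying that theorem instead of citing it, and your added details (recurrent classes, escape/entry resistances, the $j$-uniform choice of the threshold $\varepsilon_0$) are consistent with --- indeed more explicit than --- the paper's brief proof.
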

\begin{proof}
%The following is a technical requirement for negotiation: For any two agents $n$ and $m$ to negotiate without explicit communication they must both be able to sense the effects of the other agent's actions on their own reward.
%This holds in our case because 
Any two agents can increase or decrease each other's utilities by choosing the same action or a different action, respectively. 
Formally, this is known as interdependence and is defined in \cite{marden2014achieving}. 
As $\gre \rarrow 0$ we have $\pi^* \rarrow 1$. This is due to the interdependence of our dynamics and Theorem 3.2 from \cite{marden2014achieving}. For our $\varepsilon$, the stationary probability of the optimal state is $\pi^*>2/3$.
\end{proof}
\begin{lemma}
\label{lem:insuff_mix_time}
$\mathbb{P}\left( \hat{\pi}^* \leq \frac{1}{2} \leq \pi^*\frac{3}{4} \right) 
= O\left(e^{-j^{1+\delta/4}}\right)$
\end{lemma}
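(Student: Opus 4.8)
The plan is to view $\tilde{\mathcal{M}}^j$ as a finite, irreducible, aperiodic Markov chain on $\mathcal{Z}$ --- a regular perturbed Markov chain with perturbation parameter $\varepsilon$, in the sense of \cite{marden2014achieving} --- and to apply a Chernoff-type concentration bound for Markov chains to the tail average $\hat{\pi}^*$. First I would record that, by \eqref{eq:content_action_choice}, \eqref{eq:discontent_action_choice}, \eqref{eq:content_cotent} and \eqref{eq:new_mood}, every nonzero one-step transition probability of $\tilde{\mathcal{M}}^j$ is either of the form $1-O(\varepsilon^{c})$ or a product of finitely many factors each at least $\varepsilon^{\max(c,U_{\max})}/(M-1)$; consequently $\tilde{\mathcal{M}}^j$ admits a Doeblin minorization, and therefore has a mixing time $t_{\text{mix}}<\infty$, with a constant that does not depend on $j$. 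By Lemma~\ref{lem:sufficient_stationary_distribution}, $\pi^*>2/3$, so $\tfrac34\pi^*>\tfrac12$ and the event in the statement is exactly $\{\hat{\pi}^*\le\tfrac12\}$; it suffices to bound $\mathbb{P}(\hat{\pi}^*\le\tfrac12)$.

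The tail over which $\hat{\pi}^*$ is averaged consists of $n\triangleq\alpha j^{1+\delta/3}$ consecutive blocks, preceded by the head of the Negotiation Phase of length $(1-\alpha)j^{1+\delta/3}\gg t_{\text{mix}}$; hence the law of the state entering the tail is within total variation $2^{-\Theta(j^{1+\delta/3})}$ of the stationary law $\tilde{\pi}$, and in any case its Radon--Nikodym derivative with respect to $\tilde{\pi}$ is bounded by the constant $1/\min_{z}\tilde{\pi}(z)$. Applying a Chernoff-type concentration inequality for uniformly ergodic finite Markov chains (of the Lezaud / Gillman / Paulin type) to the indicator $f(z)=\mathbb{1}(z=z^*)$, whose stationary mean is $\pi^*$, with deviation $\eta\triangleq\pi^*-\tfrac12>\tfrac16$, yields
\begin{equation}
\mathbb{P}\!\left(\hat{\pi}^*\le\pi^*-\eta\right)\;\le\;C\exp\!\left(-\frac{c_0\,\eta^{2}\,n}{t_{\text{mix}}}\right),
\end{equation}
for absolute constants $C,c_0>0$, the prefactor $C$ absorbing the bounded initial density and the $2^{-\Theta(j^{1+\delta/3})}$ burn-in error.

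It then remains to substitute parameters: with $\eta>\tfrac16$, $t_{\text{mix}}=O(1)$ and $n=\alpha j^{1+\delta/3}$ we obtain $\mathbb{P}(\hat{\pi}^*\le\tfrac12)\le C\exp(-c_1 j^{1+\delta/3})$ for some $c_1>0$. Since $\delta/3>\delta/4$, the ratio $c_1 j^{1+\delta/3}/j^{1+\delta/4}=c_1 j^{\delta/12}\to\infty$, so $c_1 j^{1+\delta/3}\ge j^{1+\delta/4}$ for all sufficiently large $j$, whence $\mathbb{P}\!\left(\hat{\pi}^*\le\tfrac12\le\tfrac34\pi^*\right)=O\!\left(\exp(-j^{1+\delta/4})\right)$, as claimed.

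I expect the main obstacle to be the $j$-uniformity of the mixing time $t_{\text{mix}}$: although $\varepsilon,M,N,c$ and $U_{\max}$ are fixed, the transition kernel of $\tilde{\mathcal{M}}^j$ still depends on the random estimates $\hat{U}^j$ produced in the Estimation Phase --- both through the perturbation exponents and through whether a Content agent's estimated utility is seen to change across consecutive blocks --- so one must verify that the minorization constant, hence $t_{\text{mix}}$, can be chosen independently of $j$ and of that randomness; this rests on the exponents of $\varepsilon$ in \eqref{eq:discontent_action_choice} and \eqref{eq:new_mood} being uniformly bounded by $\max(c,U_{\max})$. A secondary point, the non-stationary start of the tail, is handled either by the burn-in estimate above or, more elementarily, by partitioning the tail into $\Theta(n/t_{\text{mix}})$ super-blocks of constant length, observing that each begins within $2^{-\Theta(1)}$ total variation of $\tilde{\pi}$ and hence contains $z^*$ with probability $>\tfrac12$, and concluding via Azuma's inequality for the resulting bounded-difference martingale.
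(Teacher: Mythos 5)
Your proposal is correct and takes essentially the same route as the paper: both reduce the event to $\{\hat{\pi}^*\le\tfrac12\}$ via $\pi^*>2/3$ from Lemma~\ref{lem:sufficient_stationary_distribution}, apply a Chernoff--Hoeffding concentration bound for Markov chains to the tail-average occupation of $z^*$ over the $\alpha j^{1+\delta/3}$ tail blocks (the paper uses the Chung--Lam--Liu--Mitzenmacher bound with the $\left\Vert \varphi\right\Vert_{\pi}$ prefactor and mixing time $T_{1/8}$, you a Lezaud/Gillman/Paulin-type bound), handle the non-stationary start via the burn-in, and conclude since $j^{1+\delta/3}$ dominates $j^{1+\delta/4}$. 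Your explicit argument that the mixing time is bounded uniformly in $j$ (via the $\varepsilon$-based minorization) addresses a point the paper treats as implicit.
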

\begin{proof}
Let $\mathcal{Z}$ be the set of states of $\mathcal{\tilde{M}}^j$ and Let $\pi$ be its stationary distribution.
Let $\varphi$ be the distribution of block $(1- \alpha) j^{1+\delta/3}$. Let  $\left\Vert \varphi\right\Vert_\pi$ be the $\pi$-norm of $\varphi$ defined by:
\begin{equation}
    \left\Vert \varphi\right\Vert _{\pi}
    \triangleq
    \sqrt{\sum_{z\in \mathcal{Z}}\frac{\varphi^{2}(z)}{\pi (z)}}
\end{equation}

Let $T_{1/8}$ be the the minimal amount of time necessary for $\mathcal{\tilde{M}}^j$ to reach a total variation distance of $1/8$ from $\pi$ with arbitrary initialization. Let $C$ be a positive constant, independent of $\varphi$, $T_{1/8}$, and $\pi$. 
Let $\eta \triangleq 1-\frac{1}{2\pi^*}$. According to lemma \ref{lem:sufficient_stationary_distribution} we have  $\pi^*> 2/3$ and therefore $\eta<1$. After setting $\frac{1}{2} = (1-\eta)\pi^*$ into  \eqref{eq: 52}, the Markovian concentration bound from \cite{ChungMine} produces the following upper bound on the probability of an Insufficient Mixing Time Error of  \eqref{eq: 52}: 
\begin{equation}
\label{eq:final_equation_1}
    C||\varphi||_\pi \exp \left( - \frac{ \left(1 - \frac{1}{2 \pi^*} \right)^2 \pi^*\alpha j^{1+\delta/3}}{72T_{1/8}} \right)
\end{equation}
Because $\pi^*>2/3$ we obtain
\begin{equation}
    \left(1 - \frac{1}{2 \pi^*} \right)^2 \pi^* \geq \frac{1}{24}
\end{equation}
setting this into  \eqref{eq:final_equation_1} completes the proof. 

There is a tradeoff regarding $\left\Vert \varphi\right\Vert _{\pi}$. Starting from an arbitrary initial condition, $\left\Vert \varphi\right\Vert _{\pi}$
can be large.
By dedicating the first $\alpha j^{1+\delta/3}$ blocks of the Negotiation Phase to letting the Markov chain approach its stationary distribution,
and starting to count the visits to $z^*$ only afterward, we can reduce $\left\Vert \varphi\right\Vert _{\pi}$
significantly, at the cost
of $(1-\alpha) j^{1+\delta/3}$ turns less for estimating $z^*$. Optimizing over $\alpha$ can improve the constants of
the bound in \eqref{eq:final_equation_1}.
\end{proof}

\section{Numerical Examples}
\label{sec: simulation results}

We illustrate  the performance of the ENE algorithm in the non-asymptotic regime in comparison with a random allocation and the distributed (selfish) Upper Confidence Bound (dUCB) algorithm~\cite{besson2018multi}. 

Suppose $N=4$ agents share $M=2$  communication channels. 
When two agents or more choose the same channel at the same time, they share it equally via a round-robin Time Division Multiple Access (TDMA) mechanism. 
Assume one communication channel has high throughput (strongly desirable) and the other has low throughput (weakly desirable). If all agents use the high throughout channel, then the channel becomes over crowded and its throughout reduces.
We illustrate how the agents can achieve an optimal distributed allocation over the channels using the ENE algorithm.

% Suppose four agents  share two resources. Assume  one of the resources is strongly desirable and the other is weakly desirable. All agents prefer the strongly desirable resource, but some of them compromise for the greater good and access the weaker resource. 

% For example, these resources may be wireless channels. 
% Interestingly, the optimal allocation for this matrix is not load-balanced.
% This is a feature of our algorithm, that finds the optimal allocation whether it is load-balanced or not. 
% In the ad-hoc wireless network scenario, the sharing agents perform TDMA on their channel such that \eqref{eq:util_def} holds. 

To improve the convergence speed the number of blocks and their duration can be scaled by fixed constants to better fit the specific problem parameter, e.g., number of agents, number of resources, and noise variance. 
This is a feature of the ENE algorithm, and does not change the $O(\log_2^3(T))$ regret guarantee. 
%This only affects the ENE's asymptotic regret guarantees by the same constant. 
In all the simulations the noise $\nu$ is Gaussian with variance $0.1$, the constant $\alpha$ is 1, the constant $c$ is $N$, and $\varepsilon = 10^{-2}$.

%The greedy algorithm uses the same payoff-Estimation Phase as the ENE, but  skips the Negotiation Phase and continues directly to the Exploitation Phase where each agent chooses the  resource with the highest estimated $U_{n,m,1}$. 

Figure \ref{fig:reg_eff}(a) depicts the sample path of the regret when the matrix $U_{n,m,1}$ is:
\begin{equation}
\label{eq:example_matrix}
    \begin{pmatrix}
    1 & 1 & 1 & 1 \\
    0.24 & 0.24 & 0.24 & 0.24
    \end{pmatrix}
\end{equation}
It can be seen that the regret is indeed $O(\log_2^3(t))$, while the two other algorithms suffer a linear regret. 
In this simulation, the Estimation and Negotiation Phase block sizes are scaled by $5\cdot 10^2$, the number of Negotiation Phase blocks is scaled by $5\cdot 10^3$, and the Exploitation block size is scaled by $2.5\cdot 10^6$. The number of epochs is $10$. 
 
Let the worst welfare be denoted by
\begin{equation}
    W_{\text{worst}} \triangleq  \displaystyle \min _{{\mathbf{m}} } W(\mathbf{m})
\end{equation}
We also measure the performance by the efficiency: 
\begin{equation}
    \text{Efficiency} \triangleq \frac{ \left(\frac{1}{T} \sum _{t=1} ^T W^t \right) - W_{\text{worst}}}{W^* - W_{\text{worst}}}
\end{equation}

Figure \ref{fig:reg_eff}(b) depicts the average efficiency of 50 random trials where the matrices $U_{n,m,1}$ are:
\begin{equation}
\label{eq:random_matrices}
\begin{split}
    \begin{pmatrix}
    1 & 1 & 1 & 1 \\
    0.2 & 0.2 & 0.2 & 0.2
    \end{pmatrix} - Q,
\end{split}
\end{equation}
where $Q$ is a random matrix where all the entries are uniformly distributed between 0 and 0.2. 
The steady-state efficiency of the ENE algorithm is 93\% while for the random allocation it is 85\% and for the selfish UCB it is only 82\%.
\begin{figure}[ht]
    \centering
    \includegraphics[width =  1 \columnwidth]{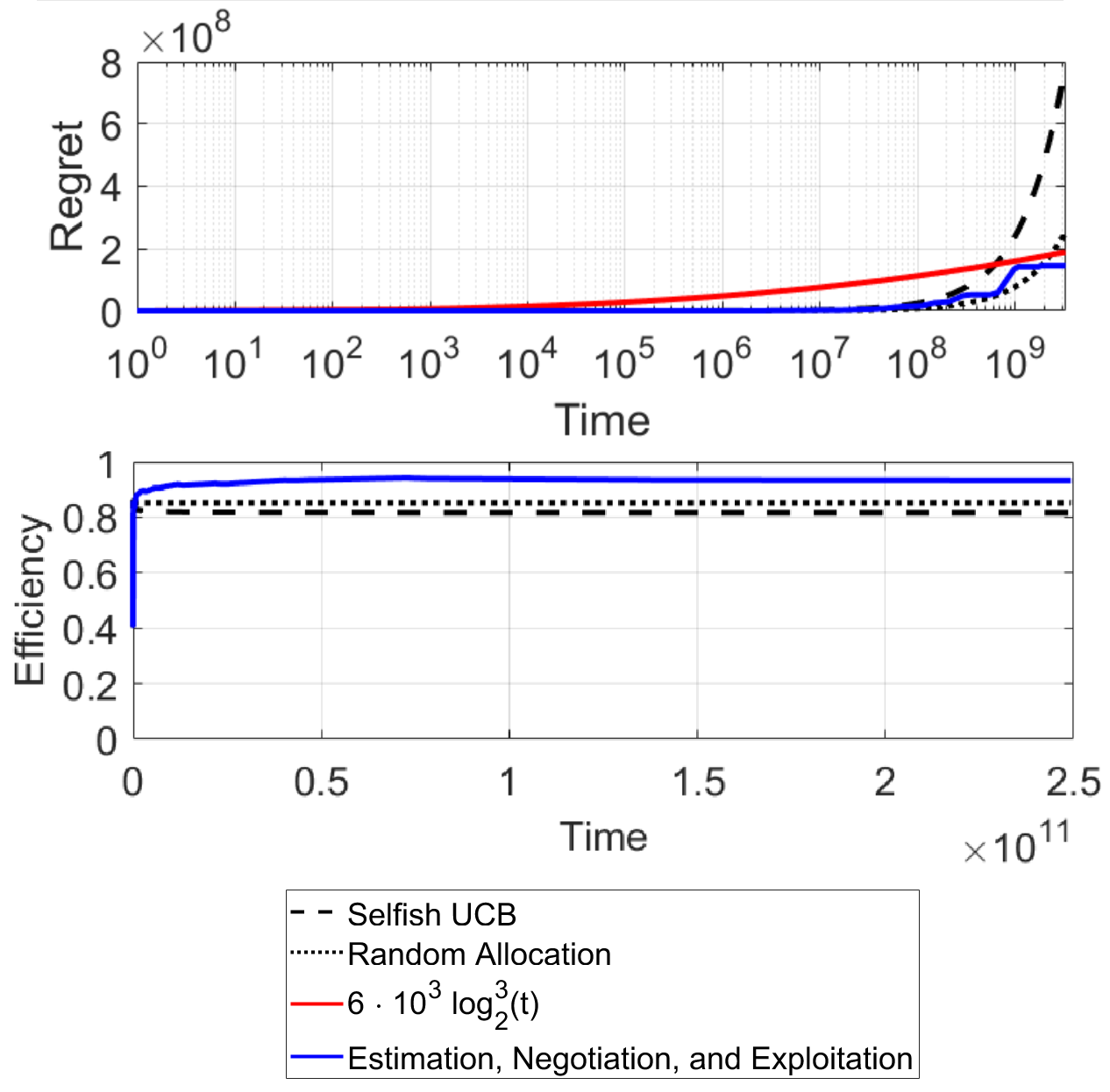}
    \caption{Comparing the ENE algorithm, the selfish UCB, and a random allocation over time. (a) Regret for the matrix \eqref{eq:example_matrix}. (b) Average Efficiency for 50 random matrices \eqref{eq:random_matrices}.}
    \label{fig:reg_eff}
\end{figure}

\section{Conclusions}
\label{sec: conclusions}
Our work extends the multi-agent multi-armed bandit problem formulation and the regret  analysis to the heavily congested case where multiple agents can share a resource without any communication between agents. 
The problem is motivated by applications to cloud computing, spectrum collaboration (in wireless networks), and machine scheduling (in operations research).  
Our formulation uses a congestion game with incomplete information. 
We propose a novel three-phased algorithm with provable expected regret of $O\left(\log_2^{3+\delta}(T)\right)$. 
%In contrast to previous works on multi-agent multi-armed bandit problems we also perform load estimation during the negotiation in order to quantify the agent-specific regret at the current state.
We demonstrated how our approach leads to welfare maximization in this particular anonymous non-linear repeated game with agent-specific stochastic utilities. 

%\bibliographystylelatex{plain}
%\bibliographylatex{references}

\bibliographystyle{IEEEbib}
% \bibliography{references.bib}

%\end{document}

% \onecolumn
%\clearpage
\appendix
\section{Appendix}
\label{sec: supplementary}

 \iffalse
 Mt the end of the Estimation Phase of epoch $j$ agent $n$'s payoff estimation under any action $a$ is denote $\hat{X}^j_{n,m}$. agent $n$ can use these estimations to estimate its utility under any action $a$ and load $\ell$ as $\hat{U}_n (a,\ell, \hat{X}^j_n) \triangleq \frac{\hat{X}_{n,m}^j}{\ell}$. Agent $n$'s absolute utility estimation error is $err_n(\hat{X}^j_n) \triangleq \displaystyle\max_{a, \ell} \Big| U_n(a, \ell) - \hat{U}_n(a, \ell, \hat{X}^j_n) \Big| $. The \emph{utility estimation error} is the largest  personal utility estimation error $err(\hat{X}^j) \triangleq \displaystyle\max_{n}\;  err_n(\hat{X}^j_n)$. The payoff-based welfare estimation under allocation  $\mathbf{m}$ is $\hat{W}(\mathbf{m},\hat{X}^j) \triangleq \sum _{n=1} ^N \hat{U}_n(m_n, \ell_n (\mathbf{m}), \hat{X}^j_n)$.  The payoff-based optimal allocation  estimation is $\hat{\mathbf{m}}^* (\hat{X}^j) \triangleq \underset{\forall{\mathbf{m}} \in \mathcal{M}^N }{\arg \max} \;  \hat{W}(\mathbf{m},\hat{X}^j) $. The \emph{sub-optimality gap} is the difference between the best and second best welfare divided by twice the number of agents $\rho \triangleq \frac{W^*-W^{**}}{2N} $.
 \fi

\subsection{Properties of sub-Gaussian random variables}
%In the following we will use the following properties of sub-Gaussian random variables. 
Let $\nu$ be a sub-Gaussian random variable with variance proxy $b$. The moment-generating function of $\nu$ is bounded by
\begin{equation}
    \mathbb{E}(e^{\nu s}) \leq \exp\left(\frac{b s^2}{2}\right) \quad \forall s \in \mathbb{R} 
\end{equation}
The following properties are repeatedly used in the proofs of the lemmas of the main paper.
%We will use repeatedly in the proofs the following properties of sub-Gaussian random variables. 
For proofs, see e.g., \cite{wainwright2019high}.
\begin{lemma}
\label{lem: sub-Gaussian times a constant}
Let $\nu$ be a sub-Gaussian random variable with variance proxy $b$. Let $g$ be some constant.
%Let $\Tilde{s} = sg$. 
The random variable $g\nu $ is sub-Gaussian with variance proxy $bg^2$.
\end{lemma}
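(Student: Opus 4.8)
The plan is to verify the defining moment-generating-function (MGF) inequality for sub-Gaussianity directly, using the corresponding inequality for $\nu$. Fix an arbitrary $s \in \mathbb{R}$. Since $g$ is a constant, we have $e^{(g\nu)s} = e^{\nu(gs)}$, so taking expectations gives $\mathbb{E}\!\left(e^{(g\nu)s}\right) = \mathbb{E}\!\left(e^{\nu(gs)}\right)$. Now apply the hypothesis that $\nu$ is sub-Gaussian with variance proxy $b$, evaluated at the real number $gs$ in place of $s$: this yields $\mathbb{E}\!\left(e^{\nu(gs)}\right) \leq \exp\!\left(\tfrac{b(gs)^2}{2}\right) = \exp\!\left(\tfrac{(bg^2)s^2}{2}\right)$. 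Chaining these, $\mathbb{E}\!\left(e^{(g\nu)s}\right) \leq \exp\!\left(\tfrac{(bg^2)s^2}{2}\right)$ for every $s \in \mathbb{R}$, which is exactly the statement that $g\nu$ is sub-Gaussian with variance proxy $bg^2$.

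The argument is essentially a one-line substitution $s \mapsto gs$ in the MGF bound, so there is no real obstacle; the only point worth noting is that the substitution is legitimate for all $s$ (when $g \neq 0$ the map $s \mapsto gs$ is a bijection of $\mathbb{R}$, and when $g = 0$ the random variable $g\nu$ is identically $0$, which trivially satisfies the bound with variance proxy $bg^2 = 0$). Hence the lemma holds in all cases.
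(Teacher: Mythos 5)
Your proof is correct: the substitution $s \mapsto gs$ in the MGF bound, together with the trivial $g=0$ case, is exactly the standard argument, and it matches the proof the paper implicitly relies on by citing Wainwright's book rather than spelling it out. Nothing is missing.
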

% \begin{proof}
%  The moment-generating function of $g\nu$ is bounded by
% \begin{equation}
% \label{eq: sub-Gaussian with constant}
%     \mathbb{E}\left( e^{g\nu s}\right) = \mathbb{E}\left( e^{\nu \Tilde{s}}\right) 
%     \leq e^{\frac{b\Tilde{s}^2}{2}} = e^{\frac{b s^2 g^2}{2}}
% \end{equation}
% \end{proof}
\begin{lemma}
\label{lem: sub-Gaussian average}
The average of $k$ independently and identically distributed sub-Gaussian random variables with variance proxy $b$ is sub-Gaussian with variance proxy $b/k$.
\end{lemma}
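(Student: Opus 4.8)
The plan is to bound the moment-generating function of the average directly, exploiting independence. Write $\bar\nu = \frac{1}{k}\sum_{i=1}^k \nu_i$, where $\nu_1,\dots,\nu_k$ are i.i.d.\ sub-Gaussian with variance proxy $b$. First I would use the identity $\mathbb{E}(e^{s\bar\nu}) = \mathbb{E}\big(e^{(s/k)\sum_{i=1}^k \nu_i}\big)$ and then invoke independence of the $\nu_i$ to factor this into the product $\prod_{i=1}^k \mathbb{E}\big(e^{(s/k)\nu_i}\big)$.

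Second, I would apply the defining sub-Gaussian MGF bound (the displayed inequality $\mathbb{E}(e^{\nu s})\le \exp(\tfrac{bs^2}{2})$ opening this subsection) to each factor, but with the scaled argument $s/k$ in place of $s$, obtaining $\mathbb{E}\big(e^{(s/k)\nu_i}\big) \le \exp\big(\tfrac{b(s/k)^2}{2}\big)$ for every $i$. Multiplying the $k$ identical bounds gives $\mathbb{E}(e^{s\bar\nu}) \le \exp\big(\tfrac{k b s^2/k^2}{2}\big) = \exp\big(\tfrac{(b/k)s^2}{2}\big)$, valid for all $s\in\mathbb{R}$, which is precisely the MGF characterization of a sub-Gaussian variable with variance proxy $b/k$.

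There is essentially no real obstacle; the only points requiring care are that the MGF bound must be instantiated at $s/k$ rather than $s$, and that it is independence (not merely identical distribution) that licenses the factorization of the joint MGF into a product. An alternative route would combine Lemma~\ref{lem: sub-Gaussian times a constant} (to absorb the $1/k$ scaling) with a separate lemma on sums of independent sub-Gaussian variables, but since the direct computation above is shorter and self-contained, I would present that.
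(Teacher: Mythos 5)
Your proof is correct: factoring the joint MGF by independence, applying the sub-Gaussian bound at the scaled argument $s/k$, and multiplying the $k$ factors gives exactly the variance proxy $b/k$, which is the standard argument the paper itself relies on, since it omits the proof and simply points to \cite{wainwright2019high}. No gap to report; your remark that independence (not mere identical distribution) is what licenses the factorization is the right point of care.
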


% \begin{proof}
% Let $\nu_1, ... , \nu_k$ be independently and identically distributed sub-Gaussian random variables with parameter $b$. The moment-generating function of their average $\frac{1}{k}\sum_{i=1}^k \nu_i$ is bounded by
% \begin{equation}
% \begin{split}
%   \mathbb {J} \left( e^{  s \frac{1}{k}\sum_{i=1}^k \nu_i  } \right) 
%     \overset{(a)}{\leq}
%     \prod_{i=1}^k \mathbb {J} \left( e^{  s \nu_i / k  } \right) 
%       \overset{(b)}{\leq}
%      \prod_{i=1}^k \exp \left(  \frac{b }{2} \frac{s^2}{k^2} \right) 
%      \overset{(c)}{\leq}
%      \exp \left(  \frac{b }{2} \frac{s^2}{k} \right) 
% \end{split}    
% \end{equation}
% Where (a) is due to independence, (b) is due to equation \eqref{eq: sub-Gaussian with constant}, and (c) is simple algebra. 
% \end{proof}

An immediate corollary of Lemmas \ref{lem: sub-Gaussian times a constant} and \ref{lem: sub-Gaussian average} is that the average of $k$ independently and identically distributed sub-Gaussian random variables with variance proxy $b$ times a constant $g$ is a sub-Gaussian random variable with variance proxy $\frac{b g^2}{k}$:
\begin{equation}
    \label{eq: sub-Gaussian average times a constant}
    \mathbb {E} \left( \exp\left(  g s \frac{1}{k}\sum_{i=1}^k \nu_i \right) \right) \leq \exp \left( \frac{b s^2 g^2 }{2k}\right)
\end{equation}

\begin{lemma}
\label{lem: sum of sub Gauss with diff parameters}
Let $\nu_1$ and $\nu_2$ be independent sub-Gaussian random variables with variance proxys $b_1$ and $b_2$ respectively, then $\nu_1 + \nu_2$ is sub-Gaussian with variance proxy $b_1+b_2$.
\end{lemma}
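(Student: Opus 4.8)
The plan is to work directly with the moment-generating function characterization of sub-Gaussianity stated at the top of this subsection, namely that a random variable $\nu$ with variance proxy $b$ satisfies $\mathbb{E}(e^{\nu s}) \leq \exp(b s^2/2)$ for all $s \in \mathbb{R}$. First I would fix an arbitrary $s \in \mathbb{R}$ and write the MGF of the sum as $\mathbb{E}\!\left(e^{(\nu_1+\nu_2)s}\right) = \mathbb{E}\!\left(e^{\nu_1 s}\, e^{\nu_2 s}\right)$.

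Next I would invoke independence of $\nu_1$ and $\nu_2$ to factor the expectation of the product into the product of expectations, $\mathbb{E}\!\left(e^{\nu_1 s} e^{\nu_2 s}\right) = \mathbb{E}\!\left(e^{\nu_1 s}\right)\mathbb{E}\!\left(e^{\nu_2 s}\right)$. Applying the sub-Gaussian MGF bound to each factor gives $\mathbb{E}\!\left(e^{\nu_1 s}\right)\mathbb{E}\!\left(e^{\nu_2 s}\right) \leq \exp\!\left(\tfrac{b_1 s^2}{2}\right)\exp\!\left(\tfrac{b_2 s^2}{2}\right) = \exp\!\left(\tfrac{(b_1+b_2)s^2}{2}\right)$. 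Since $s$ was arbitrary, this is exactly the defining MGF bound for a sub-Gaussian random variable with variance proxy $b_1 + b_2$, which completes the argument.

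There is essentially no obstacle here: the only point requiring care is that the factorization of the MGF into a product genuinely requires independence (not merely uncorrelatedness), so I would state that hypothesis explicitly at the step where it is used. This lemma is recorded for later use in the main text — for instance, in Lemma~\ref{lem:01} where $\overline{r}_n^{j,1,M+1}$ is treated as a sum of a sub-Gaussian term and a bounded (hence sub-Gaussian) term, and in Lemma~\ref{lem:load_est_err_bnd} where two independent averaging errors with variance proxies $\tfrac{b}{j^{1+\delta/3}}$ and $\tfrac{2b}{j^2+j}$ are combined.
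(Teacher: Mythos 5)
Your proof is correct and matches the standard argument the paper relies on (it defers to \cite{wainwright2019high}, and its own sketch is exactly this MGF factorization): use independence to split $\mathbb{E}(e^{(\nu_1+\nu_2)s})$ into a product and bound each factor by $\exp(b_i s^2/2)$. No gap; the remark that independence (not mere uncorrelatedness) is what licenses the factorization is a fair point of care.
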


\iffalse
\begin{proof}
\begin{equation}
    E(e^{\nu_1+\nu_2}s) = E(e^{\nu_1}s)E(e^{\nu_2}s) \leq \exp(-\frac{s^2 b_1}{2}) \exp(-\frac{s^2 b_2}{2}) = \exp(-\frac{s^2 (b_1+ b_2)}{2})
\end{equation}
\end{proof}
\fi

\begin{lemma}
\label{lem: sub-Gaussian Chernoff}
Let $\nu$ be sub-Gaussian random variable with variance proxy $b$. For any $s$, it holds that 
\begin{equation}
    \mathbb{P}(\nu > s) \leq e^{-\frac{s^2}{2b}} \qquad \text{and} \qquad \mathbb{P}(\nu <- s) \leq e^{-\frac{s^2}{2b}}
\end{equation}
\end{lemma}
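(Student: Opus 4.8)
The plan is to use the classical Chernoff bounding technique: exponentiate, apply Markov's inequality, invoke the sub-Gaussian moment-generating-function bound, and then optimize over the free exponent. Throughout I take $s \ge 0$, since for $s<0$ the first inequality is vacuous (one may simply replace $s$ by $|s|$), and I use $b>0$.

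First I would fix any $\lambda > 0$ and write $\mathbb{P}(\nu > s) = \mathbb{P}(e^{\lambda \nu} > e^{\lambda s})$. Applying Markov's inequality to the nonnegative random variable $e^{\lambda \nu}$ gives $\mathbb{P}(\nu > s) \le e^{-\lambda s}\,\mathbb{E}(e^{\lambda \nu})$. Then, using the defining property of a sub-Gaussian variable with variance proxy $b$, namely $\mathbb{E}(e^{\lambda \nu}) \le e^{b\lambda^2/2}$ (the bound recalled at the start of this subsection), we obtain $\mathbb{P}(\nu > s) \le \exp\!\left(\tfrac{b\lambda^2}{2} - \lambda s\right)$ for every $\lambda > 0$.

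Second, I would minimize the exponent $g(\lambda) = \tfrac{b\lambda^2}{2} - \lambda s$ over $\lambda > 0$. Since $g$ is a convex quadratic with $g'(\lambda) = b\lambda - s$, its minimizer is $\lambda^{\star} = s/b$ (positive because $s\ge 0$, $b>0$), with $g(\lambda^{\star}) = -\tfrac{s^2}{2b}$. Substituting $\lambda = \lambda^{\star}$ yields the upper-tail estimate $\mathbb{P}(\nu > s) \le e^{-s^2/(2b)}$.

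Finally, for the lower tail I would observe that $-\nu$ is itself sub-Gaussian with the \emph{same} variance proxy $b$, because $\mathbb{E}(e^{\lambda(-\nu)}) = \mathbb{E}(e^{(-\lambda)\nu}) \le e^{b\lambda^2/2}$ for all $\lambda \in \mathbb{R}$ (the MGF bound is symmetric in the sign of the argument). Applying the bound just proved to $-\nu$ gives $\mathbb{P}(\nu < -s) = \mathbb{P}(-\nu > s) \le e^{-s^2/(2b)}$, which completes the proof. There is no genuine obstacle here; the only point worth a moment's care is the invariance of the variance-proxy condition under $\nu \mapsto -\nu$, which is exactly what makes the two tail bounds identical.
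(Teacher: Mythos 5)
Your proof is correct and follows the same standard Chernoff argument the paper relies on (the paper defers to \cite{wainwright2019high}, and its own sketched derivation is exactly the Markov-inequality-plus-MGF-bound step with the optimal exponent $\lambda = s/b$). Your additional remark that $-\nu$ is sub-Gaussian with the same variance proxy, giving the lower tail by symmetry, is the same observation the paper implicitly uses, so there is nothing to add.
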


\iffalse
\begin{proof}
\begin{equation}
\begin{split}
     & \mathbb{P}\left( \nu > s \right) \overset{(a)}{=} \mathbb{P} \left( e^{ w \nu } >e^{s w} \right) \overset{(b)}{\leq}  \mathbb {J} \left( e^{  w \nu  } \right) e^{-s w} 
     \overset{(c)}{\leq} 
     \exp \left(  \frac{b w^2 }{2} -s w \right) 
     \overset{(d)}{\leq} 
     \exp \left(  -\frac{s^2 }{2b} \right) 
\end{split}    
\end{equation}
Where (a) is because $w$ is positive and the exponential function is monotonically increasing, (b) is Markov's inequality, (c) is due to equation \eqref{eq: sub-Gaussian with constant}, and (d) is due to optimization on $w$ that results in $w =  s / b$. This is Chernoff's inequality for sub-Gaussian random variables. 
\end{proof}
\fi

An immediate corollary of Lemma \ref{lem: sub-Gaussian Chernoff} and equation \eqref{eq: sub-Gaussian average times a constant} is that the probability that a constant $g$ times the absolute value of the average of $k$ independently and identically distributed sub-Gaussian random variables with variance proxy $b$ will be greater than $s$ is bounded as follows:
\begin{equation}
    \label{eq: Chernoff on average of sub-Gaussians times a constant}
    \mathbb{P}\left( \left|   \frac{g}{k}\sum_{i=1}^k \nu_i  \right| > s \right) \leq  2\exp \left(  -\frac{k s^2 }{2 b g^2} \right).
\end{equation}
%\clearpage
\subsection{Explaining \eqref{eq:estimate_u_n_M_plus_1}}
If agent $n$ was active in time slot $\tau$ of block $M+1$ of phase 1 of epoch $j$ its load is a random variable distributed as follows:
\begin{equation}
    \ell_n^{j,1,M+1,\tau} \sim \text{Binomial}(N-1,1/2) +1
\end{equation}
Let up consider the more general case where the probability of an agent to be active in time slot $\tau$ of block $M+1$ of phase 1 of epoch $j$ is not 1/2 but rather $p$.
The first inverse moment of this random variable can be obtained as follows:
\begin{equation}
\begin{split}
    \expect{\frac{1}{\ell_n^{j,1,M+1,\tau}}} 
     = 
    & \sum_{i=0}^{N-1}\frac{1}{i+1}\binom{N-1}{i}p^i(1-p)^{N-1-i}\\
     = 
    & \sum_{i=0}^{N-1}\frac{1}{N}\binom{N}{i+1}p^i(1-p)^{N-1-i}\\
     = 
    \frac{1}{pN} & \sum_{i=0}^{N-1} \binom{N}{i+1}p^{i+1}(1-p)^{N-1-i}\\
     = 
    \frac{1}{pN} & \sum_{i=0}^{N-1} \binom{N}{i+1}p^{i+1}(1-p)^{N-1-i}
    \\
    & +\frac{(1-p)^{N}}{pN}-\frac{(1-p)^{N}}{pN}\\
     = 
    \frac{1}{pN} & \sum_{i=0}^{N} \binom{N}{i}p^{i}(1-p)^{N-i}  -\frac{(1-p)^{N}}{pN} \\
    & = 
    \frac{1}{pN} -\frac{(1-p)^{N}}{pN}
     = \frac{1-(1-p)^{N}}{pN}
\end{split}
\end{equation}
Setting $p=1/2$ completes the proof. 
%\clearpage
\subsection{Some details regarding the proof of lemma \ref{lem:incorrect_estimation_of_N}}

Agent $n$ will correctly estimate the number of agents when
\begin{equation}
    N-\frac{1}{2} 
    <
    \frac{1}{\ln{(1/2)}}\ln{\left(1 - \frac{\overline{r}_n^{j,1,M+1}}{2\overline{r}_n^{j,1,1}}\right)} 
    <
    N+\frac{1}{2}
\label{eq:61}
\end{equation}
After applying some algebra to \eqref{eq:61} we obtain:
\begin{equation}
\begin{split}
     2\left(1-\frac{1}{2^N}\right)
    +
    \frac{1}{2^{N-1}}
    \left(1-\sqrt{2}\right)
    \leq
    \frac{\overline{r}_n^{j,1,M+1}}
    {\overline{r}_n^{j,1,1}}
    \\ \leq
    2\left(1-\frac{1}{2^N}\right)
    +
    \frac{1}{2^{N-1}}
    \left(1-\frac{1}{\sqrt{2}}\right)      
\end{split}
\label{eq:62}
\end{equation}
Because $\left|1-2^{1/2}\right|>1-2^{-1/2}$ we can make the lower bound in \eqref{eq:62} tighter:
\begin{equation}
\begin{split}
    2\left(1-\frac{1}{2^N}\right)
    -
    \frac{1}{2^{N-1}}
    \left(1-\frac{1}{\sqrt{2}}\right)   
    \leq
    \frac{\overline{r}_n^{j,1,M+1}}
    {\overline{r}_n^{j,1,1}}
    \\ \leq
    2\left(1-\frac{1}{2^N}\right)
    +
    \frac{1}{2^{N-1}}
    \left(1-\frac{1}{\sqrt{2}}\right) 
\end{split}
\label{eq:63}
\end{equation}
We now wish to move from an additive bound to a multiplicative bound. 
We take notice of the following:
\begin{equation}
\label{eq:64}
    \frac{2^{1-N}(1-2^{-1/2})}{2(1-2^{-N})}
    >
    \frac{(1-2^{-1/2})}{2^N}
    >
    \frac{1/4}{2^N}
    >
    2^{-2-N},    
\end{equation}
We use \eqref{eq:64} to make the bounds in \eqref{eq:63} tighter:
\begin{equation}
\begin{split}
    2\left(1-\frac{1}{2^N}\right)
    \left(1-2^{-N-2}\right)
    \leq
    \frac{\overline{r}_n^{j,1,M+1}}
    {\overline{r}_n^{j,1,1}}
     \\ \leq
    2\left(1-\frac{1}{2^N}\right)
    \left(1+2^{-N-2}\right)    
\end{split}
\label{eq:65}
\end{equation}
We multiply all sides of \eqref{eq:65} by the ratio of the expectations:
\begin{equation}
    \left(1-2^{-N-2}\right)
    \leq
    \frac{\overline{r}_n^{j,1,M+1}}
    {\overline{r}_n^{j,1,1}} 
    \;
    \cdot
    \;
    \frac{\expect{\overline{r}_n^{j,1,1}}}
    {\expect{\overline{r}_n^{j,1,M+1}}}
    \leq
    \left(1+2^{-N-2}\right)
\label{}
\end{equation}
The last expression will hold if
\begin{align}
    \sqrt{
    \left(1-2^{-N-2}\right)}
    \leq
    & \frac{\overline{r}_n^{j,1,M+1}}
    {\expect{\overline{r}_n^{j,1,M+1}}}    \leq
    \sqrt{
    \left(1+2^{-N-2}\right)}
    \label{eq:66}
    \\
    \sqrt{
    \left(1-2^{-N-2}\right)}
    \leq
    & \frac
    {\expect{\overline{r}_n^{j,1,1}}}
    {\overline{r}_n^{j,1,1}} 
    \leq
    \sqrt{
    \left(1+2^{-N-2}\right)}
    \label{eq:67}
\end{align}

Subtracting 1 from all sides of \eqref{eq:66} and \eqref{eq:67} and re-arranging  produces:
\begin{equation}
\begin{split}
    & \expect{\overline{r}_n^{j,1,M+1}}
    \left( 
    {\sqrt{
    \left(1-2^{-N-2}\right)}}
    - 1 \right)
    \leq \\
    \;& \qquad\quad \overline{r}_n^{j,1,M+1}
    -
    \expect{\overline{r}_n^{j,1,M+1}}
    \leq \\
    &\expect{\overline{r}_n^{j,1,M+1}}
    \left( 
    {\sqrt{
    \left(1+2^{-N-2}\right)}}
    - 1 \right)
    \label{eq:69}    
\end{split}
\end{equation}

\begin{equation}
\begin{split}
    & \expect{\overline{r}_n^{j,1,1}}
    \left( 
    \frac{1}{\sqrt{
    \left(1-2^{-N-2}\right)}}
    - 1 \right)
    \geq \\
     \; & \qquad\quad
    \overline{r}_n^{j,1,1}
    -
    \expect{\overline{r}_n^{j,1,1}}
    \geq \\
    & \expect{\overline{r}_n^{j,1,1}}
    \left( 
    \frac{1}{\sqrt{
    \left(1+2^{-N-2}\right)}}
    - 1 \right)
    \label{eq:70}    
\end{split}
\end{equation}

%\clearpage
For convenience we define the following bounds and notice that the upper bound is tighter in \eqref{eq:69} while the lower bound is tighter is \eqref{eq:70}:
\begin{equation}
\label{eq:71}
\begin{split}
    L_1 \triangleq \left|
    \frac{1}{\sqrt{
    \left(1+2^{-N-2}\right)}}
    - 1
    \right|
    \leq \\
    \frac{1}{\sqrt{
    \left(1-2^{-N-2}\right)}}
    - 1
\end{split}
\end{equation}
\begin{equation}
\label{eq:72}
\begin{split}
    L_2 \triangleq
    {\sqrt{
    \left(1+2^{-N-2}\right)}}
    - 1
    \leq \\
    \left|
    {\sqrt{
    \left(1-2^{-N-2}\right)}}
    - 1
    \right|    
\end{split}
\end{equation}

We make the bounds in \eqref{eq:69} and \eqref{eq:70} tighter according to our observations from \eqref{eq:71} and \eqref{eq:72}:
\begin{equation}
    \left|
    \overline{r}_n^{j,1,1}
    -
    \expect{\overline{r}_n^{j,1,1}}
    \right|
    \leq
    \expect{\overline{r}_n^{j,1,1}}L_1
\end{equation}
\begin{equation}
    \left|
    \overline{r}_n^{j,1,M+11}
    -
    \expect{\overline{r}_n^{j,1,M+1}}
    \right|
    \leq
    \expect{\overline{r}_n^{j,1,M+1}}L_2
\end{equation}
We use Taylor's series to obtain the following bounds:
\begin{alignat}{1}
    \frac{1}{\sqrt{1+x}} &\leq 1 - 0.5x +0.375x^2
    \label{eq:75}
    \\
    \sqrt{1+x} &\geq 1+x/2-x^2/8
    \label{eq:76}
\end{alignat}
We use \eqref{eq:75} and \eqref{eq:76} to obtain the following bounds:
\begin{alignat}{2}
    L_1 & 
    \geq 1-1+2^{-N-3}-0.375 \cdot 2^{-2N-4} 
    &&\geq
    2^{-N-4}
    \label{eq:77}
    \\
    L_2 & 
    \geq
    1+2^{-N-3} - 2^{-2N-7}-1
    &&\geq
    2^{-N-4}
    \label{eq:78}
\end{alignat}
When \eqref{eq:77} and \eqref{eq:78} hold, then \eqref{eq:61} holds. 

\subsection{Final comment on the Negotiation Phase duration and further explanation of equation  \eqref{eq:22}}
We wish to explain why in \eqref{eq:22} we have $\sum_{j=1}^J j^{2+2\delta/3} = O(J^{3+\delta})$. 
To do so we begin with the following equation:
\begin{equation}
\label{eq:79}
    (j^{1+\delta/3}-1)^3 = j^{3+\delta} -3j^{2+2\delta/3}+3j^{1+\delta/3}-1
\end{equation}
Rearrange \eqref{eq:79} to obtain:
\begin{equation}
\begin{split}
    3j^{2+2\delta/3}-3j^{1+\delta/3}+1
    =
    j^{3+\delta}
    -
    (j^{1+\delta/3}-1)^3
    \leq \\
    j^{3+\delta}
    -
    (j-1)^{3+\delta}
\end{split}
\label{eq:80}
\end{equation}
We take \eqref{eq:80} and sum over all the epochs of the algorithm :
\begin{equation}
\label{eq:81}
    \sum_{j=1}^J
    3j^{2+2\delta/3}-3j^{1+\delta/3}+1
    \leq
    \sum_{j=1}^J
    j^{3+\delta}
    -
    (j-1)^{3+\delta}
    =
    J^{3+\delta}
\end{equation}
We rearrange \eqref{eq:81} to obtain
\begin{equation}
    \sum_{j=1}^J
    j^{2+2\delta/3}
    \leq
    \frac{J^{3+\delta}-J}{3}+
    \sum_{j=1}^J
    j^{1+\delta/3}=O(J^{3+\delta})
    \label{}
\end{equation}

\end{document}